\def\bb{\mathbb}
\def\cal{\mathcal}
\def\sube{\subseteq}
\def\Z{{\bb Z}}
\def\I{{\cal{I}}}
\def\<{\langle}
\def\>{\rangle}
\def\rank{\operatorname{rank}}
\def\level{level}
\def\e{\epsilon}
\def\conv{\operatorname{conv}}
\def\OPT{\text{OPT}}
\def\r{k}
\def\[#1\]{\begin{align*}#1\end{align*}}
\long\def\addlprow#1{& #1 \vspace{3pt}\\}
\long\def\linearprogram#1#2#3#4{
    \tag*{{#1}}
    \begin{array}{r@{\hspace{10pt}}l@{\hspace{20pt}}l}
        #2 & \multicolumn{2}{@{}l}{#3} \vspace{10pt} \\
        \text{such that}
        \forcsvlist\addlprow{#4}
    \end{array}
}
\spnewtheorem{observe}{Observation}{\bfseries}{\rmfamily}
\title{A Knapsack Intersection Hierarchy Applied to All-or-Nothing Flow in Trees}
\author{Adam Jozefiak\inst{1} \and
F. Bruce Shepherd\inst{1} \and
Noah Weninger\inst{2}}
\authorrunning{Adam Jozefiak, F. Bruce Shepherd, and Noah Weninger}
\institute{University of British Columbia, Vancouver, Canada\\ \url{adam.jozefiak@alumni.ubc.ca}, \url{fbrucesh@cs.ubc.ca} \and
University of Waterloo, Waterloo, Canada\\\url{nweninge@uwaterloo.ca}}
\begin{document}

\maketitle

\begin{abstract}
    We introduce a natural {\em knapsack intersection hierarchy} for strengthening
    linear programming relaxations of
    packing integer programs, i.e., $\max\{w^Tx:x\in P\cap\{0,1\}^n\}$ where $P=\{x\in[0,1]^n:Ax \leq b\}$ and $A,b,w\ge0$.
    The $t^{th}$ level $P^{t}$ corresponds to adding cuts associated with the integer hull
    of the intersection of any $t$ {\em knapsack constraints} (rows of the
    constraint matrix). This model captures the maximum possible strength of
    ``$t$-row cuts'', an approach often used by solvers for small $t$. If
    $A$ is $m \times n$, then $P^m$ is the integer hull of $P$ and $P^1$
    corresponds to adding cuts for each associated single-row knapsack problem. Thus, even
    separating over  $P^1$ is NP-hard.  However, for fixed $t$ and any
    $\epsilon>0$, results of Pritchard imply there is a polytime
    $(1+\epsilon)$-approximation for $P^{t}$.  We then
    investigate the hierarchy's strength in the context of the well-studied
    all-or-nothing flow problem in trees (also called unsplittable flow
    on trees). For this problem,  we show that the integrality gap of $P^t$ is $O(n/t)$ and give examples where  the gap is $\Omega(n/t)$.
   We then examine  the stronger formulation $P_{\rank}$ where all rank constraints are added.
 For $P_{\rank}^t$, our best
    lower bound drops to $\Omega(1/c)$ at level $t=n^c$ for any $c>0$.
    Moreover, on a well-known class of ``bad instances'' due to Friggstad and Gao, we show that we can achieve this gap;  hence  a constant integrality gap for these instances is obtained at level $n^c$.
\end{abstract}

\section{Introduction}
\label{sec:knapsack_hierarchy}

In this paper we study linear relaxations for {\em packing integer programs} (PIP).
A PIP is described by a 0-1 optimization problem
$\max\,\{w^Tx:x\in\{0,1\}^n,\,Ax\le b\}$, where $w\in\Z^n_+,\,A\in\Z^{m\times n}_+,\text{ and }b\in\Z^m_+$. These integer programs capture well-known problems such as 0-1 knapsack, matroid optimization,
maximum stable set, demand matching and
all-or-nothing flow in trees (also called unsplittable flow on trees).
PIPs are also called 0-1 multidimensional knapsack
problems in the case where $m$ is fixed.
We introduce a hierarchy of strengthened PIP formulations
where level $t$ is defined by adding cuts associated with
the integer hulls of all intersections of $t$ constraints.
This {\em knapsack intersection hierarchy} is inspired by successful computational approaches;
in the case of a single constraint it corresponds to the cuts added in the
pioneering work of Crowder, Johnson, and Padberg \cite{crowder1983solving}.
We evaluate the strength of this hierarchy applied to
 the well-studied ``all-or-nothing flow''  problem in trees (ANF-Tree). This  problem generalizes weighted matching but has no known polytime $O(1)$-approximation.
 In this section we formally define the hierarchy and in the next section we discuss our results for ANF-Tree.

PIPs generally have a (one or more) natural linear relaxation $P:=\{x\in[0,1]^n:Ax\le b\}$, where $A,b \geq 0$. The discrete problem of interest is to optimize over
 the {\em integer hull}  $P_I:=\conv(P\cap\{0,1\}^n)$.
Computational solution strategies for PIPs often use some form of branch and cut
method, and one of the most effective approaches is to rely on cuts for the
knapsack polytopes associated with individual constraints
\cite{crowder1983solving}. Let $a^j$ be row $j$ of $A$ and $b^j$ be element $j$ in $b$.
For each $j\in[m]$, let $K(j)$ denote the
polytope $\{x \in [0,1]^n:  \sum_ia^j_i x_i \leq b^j\}$.
The {\em knapsack cuts} for $K(j)$ are the inequalities
which are valid for the integer hull $K_I(j):=\conv(K(j)\cap\{0,1\}^n)$, i.e., the knapsack polytope for constraint $j$. On each iteration of a
branch and cut approach (e.g., see \cite{schrijver1998theory}), one has a
feasible---but fractional---solution $\tilde{x}$ for a current relaxation $P'$ of
$P_I$. In \cite{crowder1983solving},  they generate knapsack cuts for some constraint. That is,
for some $j$, they find a valid inequality $c^Tx \leq d$ for $K_I(j)$ for which
$c^T \tilde{x} > d$. Adding such inequalities to $P'$ gives a tighter formulation for
$P_I$ on which to recurse.

This approach has also been extended to {\em multi-row cuts}. This can be set up in various ways, e.g.: (1)
by aggregating multiple constraints to form a single inequality and then generating cuts for the associated knapsack \cite{dey2014practical,xavier2017computing,dey2018theoretical} and (2) by
considering cuts associated with the integer hull of the intersection of
several knapsack polytopes
\cite{louveaux2008polyhedral,kellerer2004multidimensional}. The latter set-up
is potentially stronger in the following sense: there are instances  where adding all
cuts of type (2) defines the integer hull but adding (any number of)
cuts of type (1) does not.\footnote{A well-known example for the Chv\'atal rank
actually shows that one may need an unbounded number of rounds of aggregated
cuts in order to obtain the integer hull (e.g., see Section 23 in \cite{schrijver1998theory}).}

We discuss a framework to measure the strength of cuts in the latter setting.
For some $S \subseteq [m]$, we denote the intersection of the fractional knapsack polytopes associated with constraints in $S$; we define
$K(S) := \cap_{j \in S} K(j)$.
We consider a relaxation where all  cuts are added for the associated integer hull $K_I(S)$.
We then define a {\em knapsack intersection hierarchy}  for
$P_I$ as follows.  For each $t\in[m]$,  define
\begin{align*}
    P^t := \bigcap_{|S|=t}K_I(S).
\end{align*}

\noindent In other words, $P^t$ is  obtained from $P$ by adding,
for each $S \subseteq [m]$ with $|S| = t$, all valid inequalities for $K_I(S)$.
Clearly, $P^{t+1} \subseteq P^t$ and $P^m=P_I$, so we have the
following hierarchy:
\[
P \supseteq P^1 \supseteq \ldots \supseteq P^{m-1} \supseteq P^m = P_I.
\]

\noindent Separating over $K_I(S)$ is NP-Hard given that, even for $t=1$,  0-1 knapsack is a special case,
Hence it is already NP-Hard to separate over $P^1$, the first level of the hierarchy; this fate is shared by a different hierarchy, since the Chv\'atal closure
of a polyhedron is NP-hard to separate \cite{eisenbrand1999note}.
To mitigate this, we show that results of Pritchard \cite{pritchard2010lp} lead to a tractable formulation, that is, one that is polynomially sized, but
approximate,  when $t$ is constant.

\begin{theorem}
    \label{prop:approx}
    For $0<\epsilon\le 1$, there is an approximate formulation for $P^t$ of size $O(n^{t^3\epsilon^{-1}+t+1})$ for which the value of an optimal solution is at most
    a $(1/(1-\epsilon))$-factor larger than the optimal solution to $P^t$.
\end{theorem}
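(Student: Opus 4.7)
The plan is to construct, for each $t$-subset $S \subseteq [m]$, a polynomial-size polyhedral relaxation $Q(S)\supseteq K_I(S)$ with a small multiplicative integrality gap via Pritchard's scheme \cite{pritchard2010lp}, and then take the intersection $Q := \bigcap_{|S|=t} Q(S)$ as the desired approximate formulation for $P^t$.

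Applied to the $t$-row PIP defined by the constraints in $S$, Pritchard's construction yields a (possibly extended) formulation for a polytope $Q(S)\supseteq K_I(S)$ of size $O(n^{t^3/\epsilon + 1})$, satisfying
\[
\max_{x \in Q(S)} w^T x \;\le\; \frac{1}{1-\epsilon}\,\max_{x \in K_I(S)} w^T x \quad \text{for every } w\ge 0.
\]
I first upgrade this integrality-gap statement to the uniform scaling property $(1-\epsilon)\,Q(S)\subseteq K_I(S)$. Since $A,b\ge 0$, the set $K_I(S)$ is down-closed inside $\R^n_+$, so if $(1-\epsilon)y\notin K_I(S)$ for some $y\in Q(S)$, then some inequality $w^T x \le c$ with $w\ge 0$ separates $(1-\epsilon)y$ from $K_I(S)$ (any separating normal may be replaced by its positive part, which remains valid on a down-closed set and only makes the left-hand side larger at the nonnegative point $(1-\epsilon)y$). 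This yields $w^T y > c/(1-\epsilon) \ge \max_{x \in K_I(S)} w^T x / (1-\epsilon)$, contradicting the gap bound.

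With the scaling property in hand, set $Q := \bigcap_{|S|=t} Q(S)$. There are $\binom{m}{t} = O(n^t)$ subsets, so the total formulation size is $O(n^{t^3/\epsilon + t + 1})$. Containment $P^t\subseteq Q$ follows immediately from $K_I(S)\subseteq Q(S)$ for each $S$. For the approximation bound, let $y^*$ attain $\max_{x\in Q} w^T x$; then $y^* \in Q(S)$ for every $S$, so $(1-\epsilon)y^* \in K_I(S)$ for every $S$, hence $(1-\epsilon)y^*\in P^t$, and therefore $\max_{x\in Q} w^T x = w^T y^* \le (1-\epsilon)^{-1}\max_{x\in P^t} w^T x$, as required.

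The main obstacle I foresee is the promotion from Pritchard's integrality-gap guarantee---which is naturally phrased in terms of optimal LP values for a fixed objective $w$---to the uniform, objective-independent scaling statement $(1-\epsilon)Q(S)\subseteq K_I(S)$. This uniformity is essential, since only then can one safely combine the $Q(S)$ across different $S$ by intersection: an objective-by-objective bound would not survive intersection because the worst-case $S$ for a given $w$ on $P^t$ need not be the same as the worst-case $S$ for $Q$. Down-monotonicity of $K_I(S)$, which holds precisely because the constraint data is nonnegative, is exactly what makes this promotion go through.
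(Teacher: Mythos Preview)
Your proposal is correct and follows essentially the same route as the paper: build Pritchard's approximate formulation $K_\epsilon(S)$ for each $t$-subset $S$, intersect them, and count the total size. The one difference is that the paper invokes Pritchard directly for the \emph{polyhedral approximation} property $(1-\epsilon)K_\epsilon(S)\subseteq K_I(S)\subseteq K_\epsilon(S)$, whereas you start from the weaker-sounding objective-wise integrality-gap statement and promote it to the uniform scaling inclusion via the down-closedness of $K_I(S)$. Your promotion argument is sound (and is exactly why such gap bounds for packing problems are equivalent to polyhedral approximations), but it is not needed here since Pritchard's result already delivers the sandwich inclusion. So the ``main obstacle'' you flag is not actually an obstacle.
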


\begin{corollary}
    \label{cor:approx}
    For fixed $t$ there is a PTAS for $\max\{w^Tx : x\in P^t\}$.
\end{corollary}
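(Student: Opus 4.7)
The plan is to derive \Cref{cor:approx} as an essentially immediate consequence of \Cref{prop:approx}, treating the theorem's parameter $\epsilon$ as the desired PTAS precision. Given a target accuracy $\epsilon \in (0,1]$, I would invoke \Cref{prop:approx} to obtain an approximate formulation $\tilde{P}$ of size $O(n^{t^3\epsilon^{-1}+t+1})$ whose LP optimum is at most a factor $1/(1-\epsilon)$ larger than $\OPT(P^t):=\max\{w^Tx : x\in P^t\}$. For $t$ and $\epsilon$ both fixed, the exponent $t^3\epsilon^{-1}+t+1$ is a constant, so $\tilde{P}$ has polynomially many variables and constraints in $n$. Then standard linear programming algorithms (the ellipsoid method, or interior-point methods) solve $\max\{w^Tx : x\in \tilde{P}\}$ in time polynomial in $n$.

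The second step is to rewrite the guarantee of \Cref{prop:approx} in PTAS form. Letting $V$ denote the value returned by the compact LP, the theorem gives $V \leq \OPT(P^t)/(1-\epsilon)$, equivalently $(1-\epsilon)V \leq \OPT(P^t) \leq V$, which is exactly a $(1-\epsilon)$-approximation to $\OPT(P^t)$. Since the running time is polynomial in $n$ for every fixed $\epsilon$, this establishes the PTAS. If the PTAS is additionally required to return a primal solution in $P^t$ (rather than just the value), I would exploit the fact that for packing problems ($A,b,w\ge 0$) the polytope $P^t$ is down-closed in the non-negative orthant; scaling any optimal $\tilde{x}\in\tilde{P}$ by $1-\epsilon$ yields $(1-\epsilon)\tilde{x}\in P^t$ with objective value at least $(1-\epsilon)\OPT(P^t)$.

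There is no substantive obstacle here, since all the difficulty is absorbed into \Cref{prop:approx}. The only points that warrant verification are routine: that the hidden constants in the formulation size depend only on $t$ and $\epsilon$ (and not on $n$), that the encoding length of the compact LP is polynomial in the input size so that a polynomial-time LP solver actually runs in time polynomial in the original input, and that the running time's dependence on $\epsilon$ does not prevent polynomiality in $n$ for each fixed $\epsilon$ (it does not, since $1/\epsilon$ enters only in the constant exponent). With these sanity checks, the corollary follows directly.
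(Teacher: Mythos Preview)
Your derivation is correct and matches the paper, which treats the corollary as immediate from \cref{prop:approx} and gives no separate proof. One small remark: the step ``$(1-\epsilon)\tilde{x}\in P^t$'' is not really a consequence of down-closure of $P^t$ (down-closure would only help if you already knew $\tilde{x}\in P^t$); it follows instead from the polyhedral-approximation containment $(1-\epsilon)P^t_\epsilon\subseteq P^t$ established in the proof of \cref{prop:approx}, and the same containment is what gives you $\OPT(P^t)\le V$.
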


We defer the proofs to \cref{sec:first2proofs}. We now discuss the impact of the knapsack hierarchy on formulations for ANF-Tree.

\subsection{All-or-Nothing Flow in Trees}

The {\em all-or-nothing flow problem} \cite{chekuri2007multicommodity} is defined for a multiflow problem whose input is a supply graph $G$ and demand graph $H$.  $G$ and $H$ may also be endowed with edge capacities $u_e: e \in E(G)$ and demands $d(f): f \in E(H)$. We call $E(H)$ the requests and  a subset $R$ of requests  is {\em routable} if there is a (fractional) multiflow which routes the requests in $R$ using $G$'s capacity. The problem is ``all-or-nothing'' in the sense that if $f \in R$, then we must route the whole $d(f)$ units of demand. An instance is said to satisfy the
{\em no-bottleneck-assumption} (NBA) if $d(f) \leq u_e$ for every request $f$ and supply edge $e$.
ANF-Tree is the special case of all-or-nothing flow where the supply graph $G$ is a tree.

When the NBA holds,
there is a polylog approximation in general graphs
\cite{chekuri2004all} and a
 $48$-approximation in trees \cite{chekuri2007multicommodity}.
Without the NBA, however, the natural LP has a super-constant integrality gap. The first theoretical progress for the non-NBA setting was a quasi-PTAS  when the supply graph is a path \cite{bansal2006quasi}. A sequence of papers has ultimately yielded a constant-factor approximation (and integrality gap) for paths, the best of which is an $O(1+\frac1{1+e}+\epsilon)$-approximation \cite{grandoni2020unsplittable}, and an LP with integrality gap $7+\epsilon$ \cite{bonsma2014constant}. For trees, however,
the strongest result is an $O(\log^2 n)$-approximation
 \cite{chekuri2009unsplittable,friggstad2015linear,adamaszek2016submodular}. It
remains an open question whether ANF-Tree has an $O(1)$-approximation (or even an $O(\log n)$-approximation),
and whether ANF-Path has a PTAS.

For trees we use the following notation. An instance $\I=(T,R)$ of ANF-Tree
    consists of an undirected capacitated tree $T=(V,E,u)$ and a set of requests $R$, defined as follows.
    $V$ is the set of vertices and each edge $e\in E$ has some positive capacity $u_e$.
    Each request $r\in R$ imposes some
    non-negative demand $d_r$ on all edges along the unique simple path $P_r$ between $s_r\in V$ and $t_r\in V$. A request may also have a profit $w_r$.
    We assume that $s_r\ne t_r$ for all $r\in R$.
    For each edge $e$, let $R_e=\{r\in R:e\in P_r\}$.
    We denote  $\r=|R|$  and  $m=|E|$.
   A subset
    $S\subseteq R$ of requests is {\em feasible} or {\em routable} if, for each edge $e\in E$, the total demand of all requests $r\in S\cap R_e$
    is at most the capacity $u_e$.
    The goal is to select a feasible subset $S\subseteq R$ which maximizes the profit
    $\sum_{r\in S}w_r$.
    We formalize this with the following IP.
    \[
        \linearprogram{ANF-IP}{\max}{w^Tx}{
            {\sum_{r\in R_e}d_rx_r\le u_e&\forall\,e\in E},
            {x_r\in\{0,1\}^\r&\forall\,r\in R},
        }
    \]
    The natural LP relaxation ANF-LP is defined by replacing $x\in \{0,1\}^\r$ with $x\in[0,1]^\r$;
    ANF-Path is defined similarly.

    One approach for strengthening ANF-LP is to add
    {\em rank constraints} \cite{chekuri2009unsplittable,friggstad2015linear}. For
    $S\subseteq R$ its rank is defined as $\rank(S):=\max\{|T|:T\subseteq S\text{ and }T\text{ is feasible}\}$, and
    the {\em rank constraint}  is then $\sum_{i\in S}x_i\le\rank(S)$.
    Adding all such inequalities to ANF-LP defines the Rank-LP.
    We denote by $P_{\rank}$ the polytope obtained from adding all rank constraints to an ANF-Tree relaxation $P$.
    Rank-LP is NP-Hard to separate, but it can be $O(1)$-approximated \cite{chekuri2009unsplittable,friggstad2015linear}.

   We summarize the known results for these general ANF-Tree formulations.
   \begin{theorem}
   \label{thm:oldresults}~
       \begin{enumerate}
           \item The integrality gap of ANF-LP is $\Omega(\r)$ \cite{chakrabarti2002approximation}.
           \item For ANF-Path, the integrality gap of Rank-LP is $O(\log \r)$
               and the best known lower bound is $\Omega(1)$ \cite{chekuri2009unsplittable}.
            \item For ANF-Tree, the integrality gap of Rank-LP is $O(\log^2 \r)$
               and the best known lower bound is $\Omega(\sqrt{\log \r})$ \cite{friggstad2015linear}.
       \end{enumerate}
   \end{theorem}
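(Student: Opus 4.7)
The plan is to verify that the constructions and analyses in the cited references transfer to ANF-LP and Rank-LP as defined here, and to recount the key ideas for each of the three parts rather than re-prove them from scratch, since \cref{thm:oldresults} is a compilation of known results. For part~(1), I would exhibit the construction of \cite{chakrabarti2002approximation}: a small instance (a single bottleneck edge, or a short path, suffices) carries $\Theta(\r)$ requests whose demands pairwise exceed the edge capacity, so that any integral feasible subset has size $O(1)$, while a fractional solution of ANF-LP assigns each request weight $\Omega(1)$ and so has value $\Omega(\r)$. The only real calculation is to calibrate demands and profits to witness the $\Omega(\r)$ ratio.

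For parts~(2) and~(3), I would reproduce the upper-bound approach of partitioning requests into $O(\log \r)$ demand-scale classes and rounding each class independently. On a path, \cite{chekuri2009unsplittable} shows that within one scale the rank constraint reduces to a bounded-gap packing problem, so summing over classes loses $O(\log \r)$. On a tree, \cite{friggstad2015linear} adds a level decomposition of the tree that costs another $O(\log \r)$ factor, giving $O(\log^2 \r)$ overall. The $\Omega(1)$ path lower bound follows from a two-request example; the $\Omega(\sqrt{\log \r})$ tree lower bound is the delicate recursive construction of \cite{friggstad2015linear}, in which $\Theta(\log \r)$ geometrically spaced demand scales combine so that no integral routing can match more than a $1/\sqrt{\log \r}$ fraction of an explicit fractional rank solution.

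The main obstacle, if one wished to re-derive these bounds in full, is the tree case in part~(3): both the $O(\log^2 \r)$ decomposition-and-rounding analysis and the $\Omega(\sqrt{\log \r})$ lower bound construction are intricate and would each demand a dedicated section. For the present paper we rely on the cited proofs, since the theorem is used only as a benchmark against which to compare the integrality gap of $P_{\rank}^t$ in what follows.
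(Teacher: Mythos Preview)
Your proposal is correct in spirit and matches the paper's treatment: \cref{thm:oldresults} is stated purely as a summary of prior results with citations, and the paper gives no proof of its own, so deferring to \cite{chakrabarti2002approximation,chekuri2009unsplittable,friggstad2015linear} is exactly what is done. One small inaccuracy worth noting: the $\Omega(\r)$ construction you sketch for part~(1) (many requests pairwise conflicting on a single bottleneck edge) is not quite the staircase instance the paper later relies on---in the staircase $S^\r$ the requests have geometrically increasing demands along a path with matching capacities, so that the $\tfrac12$-vector is fractionally feasible while only one request is integrally routable---but either construction witnesses the same $\Omega(\r)$ gap, and neither is proved in the paper itself.
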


\subsection{Knapsack Hierarchy and Strengthening ANF-Tree Relaxations}

In the rest of the paper, $P$ refers to the feasible region of ANF-LP and hence $P_{\rank}$  refers to same for Rank-LP.
The first  result shows  the general dependence of the integrality gap for $P^t$  on $k$ and $t$; this is similar to the Sherali-Adams hierarchy \cite{chekuri2009unsplittable} (although the proofs are not similar).
The upper bound part is  proved in \cref{lem:ktupperbound}, and the lower bound is proved in \cref{lem:ktlowerbound} - see \cref{sec:prelim}

\begin{theorem}
    \label{thm:generalUB}
    For ANF-Tree, the integrality gap of $P^t$ is $O(\r/t)$ and there are instances where it is $\Omega(k/t)$.
\end{theorem}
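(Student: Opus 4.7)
The theorem has two directions, an upper bound and a matching lower bound on the integrality gap, so I would tackle them independently. For the upper bound, the plan is to fix any $x^*\in P^t$ and to exhibit a feasible integer set of requests whose total profit is at least $\Omega(t/k)\cdot w^T x^*$. The trivial bound $\sum_r x^*_r\le k$ is the only thing a generic LP relaxation gives; the role of the $P^t$-cuts is to strengthen this mass bound on every sub-family of $t$ edges. Concretely, for any set $S$ of $t$ edges, $x^*$ restricted to the requests $R_S=\bigcup_{e\in S}R_e$ lies in $K_I(S)$, so $x^*\rvert_{R_S}$ is a convex combination of $0/1$-vectors that are simultaneously feasible for the $t$ capacity constraints of $S$.

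For the upper bound, I would choose $S$ so that both (i) the restricted LP value $\sum_{r\in R_S}w_rx^*_r$ captures a constant fraction of $w^Tx^*$, and (ii) the convex decomposition of $x^*\rvert_{R_S}$ in $K_I(S)$ produces an integer routing whose weight is at least a $t/k$ fraction of $w^Tx^*$. A natural route is to order the edges by fractional load and to either pick $S$ to be $t$ heavily loaded edges (so $x^*$ on $R_S$ inherits a large fraction of the LP mass), or to average over many choices of $S$ and pick the best one in expectation. The integer solution produced by the decomposition is feasible only on the $t$ edges in $S$, so the key final step is to argue that such a solution can be extended or restricted to remain globally feasible without losing more than a constant factor in profit; here one can use the tree structure of ANF-Tree, for example by peeling off requests whose paths leave $S$. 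The main obstacle is precisely this projection/extension argument: balancing the choice of $S$ so that the $K_I(S)$-decomposition gives a solution of large weight \emph{and} almost-global feasibility.

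For the lower bound, I would construct an explicit ANF-Tree instance with $k$ requests on which $P^t$ admits a fractional solution of value $\Omega(k)$ while ANF-IP has optimum $O(t)$. The cleanest approach is to build $\Theta(k/t)$ disjoint copies of a small gadget of size roughly $t$, where the gadgets share no edges and no requests, so the gap factorizes across copies. Inside each gadget I would aim for a sub-instance whose ANF-IP value is $O(1)$ while any relaxation adding integer-hull cuts for $t$ of its edges still allows a fractional routing of value $\Omega(t)$; summing over the $\Theta(k/t)$ gadgets then yields LP $=\Omega(k)$, IP $=O(k/t)$, and hence gap $\Omega(t)\cdot\tfrac{k/t}{k/t}\cdot\tfrac{1}{t}=\Omega(k/t)$ once the arithmetic is normalized properly. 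Natural starting points for the gadget are the classical $\Omega(k)$ ANF-LP gap constructions of Chakrabarti et al., amplified so that the bad fractional solution survives the $t$-wise integer-hull closure.

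The hardest part will be the gadget for the lower bound: one must design a small ANF-Tree instance whose bad fractional point not only satisfies all individual knapsack cuts (i.e., lies in $P^1$) but also every cut that comes from intersecting any $t$ of its knapsack constraints. I expect this to force capacities and demands that are balanced so that any $t$-wise intersection is dominated by a fractional mixture over many requests, none of which jointly route integrally. On the upper-bound side, the subtlety is the extension step, and I would be prepared to lose a constant factor there (absorbed in the $O(\cdot)$) in exchange for a clean selection rule for $S$.
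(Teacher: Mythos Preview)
Your proposal has genuine gaps in both directions.

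\textbf{Upper bound.} You try to select a single set $S$ of $t$ edges and turn a $K_I(S)$-decomposition of $x^*$ into a globally feasible integer routing. The step you yourself flag as the ``main obstacle'' is indeed the problem: a $0/1$ vector feasible for the $t$ constraints in $S$ can violate every other edge of the tree, and there is no generic mechanism to repair this at only a constant-factor loss in profit. The paper sidesteps this completely by bounding the LP value from above rather than producing an integer witness. It partitions the \emph{requests} (not the edges) into $q=O(k/t)$ groups $S_1,\dots,S_q$, each of size at most $t/4$. Using the fact that any ANF-Tree instance admits an equivalent formulation with at most $4k$ edges, the sub-instance on $S_i$ is governed by some set $T_i$ of at most $t$ edges; hence $x\in P^t\subseteq K_I(T_i)$ forces $\max\{w^Tx:x\in K_I(T_i),\,x_{R\setminus S_i}=0\}\le\OPT$. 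Summing the $q$ pieces gives $w^Tx\le O(k/t)\cdot\OPT$ with no extension step at all.

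\textbf{Lower bound.} The disjoint-gadget plan cannot work: if gadgets share no edges and no requests, both the LP value over $P^t$ and the IP value decompose additively, so the integrality gap of the union is the \emph{maximum} gap among the gadgets, not something that grows with the number of copies. (Your own arithmetic shows the trouble: $\Omega(t)\cdot\tfrac{k/t}{k/t}\cdot\tfrac1t$ equals $\Omega(1)$, and even the target ``LP $=\Omega(k)$, IP $=O(k/t)$'' would only give a gap of $\Omega(t)$, not $\Omega(k/t)$.) The paper instead uses a \emph{single} staircase instance $S^k$ with requests of demand $2^{i-1}$ and matching edge capacities. For any set $S$ of $t$ edges, the $k$ requests partition into $t+1$ sets each routable on every edge of $S$: the $t$ singletons corresponding to the indices of edges in $S$, plus one set containing all remaining requests (these fit on edge $(i,i+1)$ since $\sum_{j<i}2^{j-1}<2^{i-1}$). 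Thus $\tfrac{1}{t+1}$ is a convex combination of $S$-feasible $0/1$ vectors, so $\tfrac{1}{t+1}\in K_I(S)$ for every $|S|=t$, hence $\tfrac{1}{t+1}\in P^t$ with profit $k/(t+1)$ while the integer optimum is $1$.
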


We now focus on the {\em Friggstad-Gao} instances (definition in \cref{sec:prelim}). These instances established the $\Omega(\sqrt{\log \r})$ ANF-Tree lower bound in \cref{thm:oldresults}; this is significant since it established a super-constant lower bound even when all rank constraints are added. Furthermore, these are {\em single-sink} instances, i.e., all requests share one common endpoint.
We establish the following lower  bounds for the knapsack hierarchy
on the Friggstad-Gao instances.

\begin{theorem}
\label{thm:fglb}
For constant $t$, the integrality gap of $P^t_{\rank}$ is $\Omega(\sqrt{\log \r})$.
For any $c>0$, the integrality gap of both $P^{\r^c}$ and $P^{\r^c}_{\rank}$ is $\Omega(1/c)$.
\end{theorem}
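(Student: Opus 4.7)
The plan is to exhibit, for both parts, a Friggstad-Gao instance together with a specific fractional solution $x^*$ and to argue (i) $x^*$ realizes the claimed ratio $w^T x^*/\OPT$, and (ii) $x^* \in P^t_{\rank}$. Since $x^*$ will coincide with the fractional solution that already certifies the $\Omega(\sqrt{\log \r})$ Rank-LP gap in \cref{thm:oldresults}(3), item (i) is immediate; the entire burden is in (ii).

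To prove $x^* \in P^t_{\rank}$ I would fix an arbitrary set $S$ of $t$ constraints of $P_{\rank}$ (a mixture of edge-capacity inequalities and rank inequalities) and produce a convex combination $x^* = \sum_j \l_j \chi_{A_j}$ with each $A_j \subseteq R$ satisfying every inequality in $S$, but not necessarily those outside $S$. The technical engine is the transitive symmetry group $\Gamma$ of the Friggstad-Gao construction, which permutes isomorphic subtrees at every internal node and fixes $x^*$. Given $S$, I would designate at most $O(tL)$ subtrees as \emph{active}---namely those containing a request that appears in the support of some inequality in $S$---and call the remaining subtrees \emph{passive}. Passive subtrees are interchangeable under $\Gamma$, so symmetrizing any integral feasible subset of the requests supported in a passive subtree expresses $x^*$ restricted to that subtree as a convex combination of $0$-$1$ vectors which are transparent to $S$, since they touch no request appearing in the support of any inequality in $S$. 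For the active subtrees I would construct integral decompositions directly, using that $x^*$ fractionally satisfies every inequality in $S$, which leaves enough slack to discretize the marginals.

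For Part 1 I would take the Friggstad-Gao tree of depth $L = \Theta(\sqrt{\log \r})$ and a constant $t$; then the active region has only $O(L)$ subtrees and the decomposition within it is handled by brute enumeration. For Part 2 I would instead use a shallow instance with $L = \Theta(1/c)$ levels but branching factor large enough that $\r^c$ is much smaller than the number of requests per level; the same decomposition then yields gap $\Omega(L) = \Omega(1/c)$ both with and without rank constraints, matching the two bounds in the statement.

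The main obstacle will be controlling the rank constraints in $S$. Unlike an edge-capacity constraint, a rank inequality $\sum_{r \in T} x_r \le \rank(T)$ is determined by an arbitrary $T \subseteq R$ and by the nonlinear functional $\rank(T)$, so \emph{a priori} it need not respect the active/passive split of subtrees. I plan to establish a localization lemma asserting that in a Friggstad-Gao instance every rank-tight $T$ decomposes into a core of size $\mathrm{poly}(t,L)$ plus a disjoint union of whole subtrees, with $\rank(T)$ splitting accordingly. Once such a lemma is in place, the symmetric decomposition on passive subtrees is compatible with every rank inequality in $S$, and the remaining active region---still of controlled size---is handled by an explicit integer decomposition that matches $x^*$ restricted to the active requests.
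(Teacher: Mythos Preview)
Your proposal takes a substantially different route from the paper, and the piece you flag as ``the main obstacle'' is in fact a genuine gap.

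The paper never decomposes $x^*$ against a mixed set $S$ of edge and rank constraints. It separates the two concerns. For the edge part it recasts membership of the \emph{uniform} vector $\frac{1}{c}$ in $P^t$ as a colouring problem: for each $S$ of $t$ edges, partition all requests into at most $c$ classes, each simultaneously routable on every edge of $S$; then $\frac{1}{c}$, being the average of their indicators, lies in $K_I(S)$. This is \cref{prop:colouring}, proved via a short inductive ``layered colouring'' argument (\cref{lem:tree}) that works whenever no layer of the relevant subtree has more than $2^{h(c-1)}$ vertices. There is no symmetry group, no active/passive split, no brute enumeration. Moreover, the \emph{same} instance $T_{FG}^h$ with $h=\Theta(\sqrt{\log\r})$ serves for both parts of the theorem; what changes between Part~1 and Part~2 is only the value of $c$ in the vector $\frac{1}{c}$, not the instance.

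For the rank constraints the paper invokes a one-line black box: Theorem~5 of \cite{friggstad2015linear} shows that if $x$ satisfies every pairwise inequality $x_i+x_j\le 1$ then $x/9$ satisfies every rank inequality, and $\frac{1}{c}$ with $c\ge 2$ satisfies these trivially. Your proposed localization lemma --- that every rank-tight $T$ splits into a $\mathrm{poly}(t,L)$ core plus whole subtrees with $\rank$ additive over the pieces --- is therefore unnecessary, and it is also where your plan stops being a proof: you give no argument for it, and $\rank$ is not additive over subtree decompositions in general (a single request threading through several subtrees already breaks na\"ive additivity).

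Your Part~2 plan of switching to a shallow instance with decoupled depth and branching factor is likewise avoidable and risky: in $T_{FG}^h$ the demands, capacities, and branching are all governed by the single parameter $h$, so the routability lemmas (\cref{subtree_lem}, \cref{lem:path}) that any decomposition must lean on do not obviously survive the decoupling. The paper sidesteps this entirely by keeping the instance fixed and moving the fractional vector.
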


Interestingly, the proof of this lower bound depends  on an upper bound proof. Namely, we define a colouring problem for which we upper bound the chromatic number (see \cref{sec:treelb}).
We can also show that our analysis of $P^t_{\rank}$ on the Friggstad-Gao instances is tight in the following sense.

\begin{theorem}
\label{thm:fgub}
On the class of Friggstad-Gao instances, for any $c>0$, the integrality gap of both $P^{\r^c}$ and $P^{\r^c}_{\rank}$ is $O(1/c)$.
\end{theorem}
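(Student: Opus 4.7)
The plan is to show that on any Friggstad--Gao instance, every LP feasible point $x^* \in P^{k^c}$ can be rounded to an integer feasible solution of profit at least $\Omega(c) \cdot w^T x^*$. Since $P^{k^c}_{\rank} \subseteq P^{k^c}$, the same gap bound then follows for $P^{k^c}_{\rank}$, handling both parts of the statement simultaneously.

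First I would recall the hierarchical structure of the Friggstad--Gao construction from \cref{sec:prelim}: a single-sink tree organised into $L = \Theta(\log k)$ geometrically-spaced scales, with roughly $\Theta(2^i)$ edges at scale $i$ and requests whose demands along their path cross scales in a controlled way. This layered structure is exactly what enables the $\Omega(\sqrt{\log k})$ Rank-LP gap witnessed by these instances, and it is what we plan to exploit in reverse.

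Next, I would partition the $L$ scales into $q = \Theta(1/c)$ consecutive blocks $B_1,\ldots,B_q$, each containing $\Theta(c \log k)$ scales and hence at most $k^c$ edges. For each block $B_j$ with edge set $S_j$, we have $|S_j| \le k^c$, so the hypothesis $x^* \in P^{k^c}$ yields $x^* \in K_I(S_j)$. Thus $x^*$ is a convex combination of integer vectors each respecting every capacity constraint inside $B_j$. By averaging, there is a block $B_{j^*}$ responsible for at least an $\Omega(1/q) = \Omega(c)$ fraction of the LP profit, accounted for by requests whose ``bottleneck'' scale lies in $B_{j^*}$.

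The crux is then to promote an integer solution from the convex combination for $B_{j^*}$ into a globally feasible integer routing while losing at most a constant factor in profit. Here the single-sink geometry is essential: requests bottlenecked in $B_{j^*}$ have demands that are geometrically smaller than the capacities of deeper scales, so extending the block-$B_{j^*}$ integer routing to deeper edges is essentially free; for shallower edges, a rank-constraint packing argument should discard at most a constant fraction of the profit. I expect this extension step to be the main technical obstacle, and I anticipate it will mirror, in constructive form, the chromatic-number upper bound used to establish the lower bound in \cref{thm:fglb}: the colouring witness there should translate into an explicit scheduling of the per-block integer solution into a globally feasible routing, yielding the desired $O(1/c)$ integrality gap.
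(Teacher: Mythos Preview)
Your proposal has a concrete structural error that makes the argument break down before the ``crux'' step.

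First, the parameters are off: the Friggstad--Gao tree $T_{FG}^h$ has $h=\Theta(\sqrt{\log k})$ levels, not $\Theta(\log k)$, and level $\ell$ contains $2^{(h-1)(\ell-1)}$ edges, not $\Theta(2^\ell)$. More importantly, your central claim that each block of consecutive levels contains at most $k^c$ edges is false. The number of edges in levels $1,\dots,\ell$ is $n(\ell)=\Theta(2^{(h-1)(\ell-1)})$, so the edges are concentrated near the leaves: only the \emph{topmost} $\Theta(ch)$ levels contain $\le k^c$ edges, while any block touching the bottom levels contains $\Theta(k)$ edges. Hence for all but the first block you cannot invoke $x^*\in K_I(S_j)$, and the averaging argument collapses.

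The paper's proof sidesteps this by partitioning not into horizontal slabs of levels but into vertex-disjoint \emph{subtrees}: for $\ell$ the largest integer with $n(\ell)\le t$, set $\mathcal P_i=\{T_v^\ell:v\in\level_{i\ell+1}\}$. By the self-similarity of $T_{FG}^h$, every subtree $T_v^\ell$ has at most $n(\ell)\le t$ edges regardless of depth, so $x\in K_I(E(T_v^\ell))$ for each one. The paper then does not round at all: using \cref{subtree_lem} it observes that feasibility of a request set inside $T_v^\ell$ is determined entirely by the edges of $T_v^\ell$, so the integer vectors in the convex decomposition are genuinely routable and (by the known $\mathrm{OPT}\le 2$ and scaling) have profit at most $2^{-(h-1)i\ell+1}$. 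Summing over the $2^{(h-1)i\ell}$ subtrees in $\mathcal P_i$ gives $w_{\mathcal P_i}^Tx_{\mathcal P_i}\le 2$, and summing over the $\lceil h/\ell\rceil$ values of $i$ yields $w^Tx\le 2h/\ell=O(1/c)$. Comparing against the integral optimum $\Theta(1)$ gives the gap directly, with no global rounding step required.
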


It remains an intriguing question whether this  constant integrality gap of $O(1/c)$ holds for general tree instances,  even in the single-sink scenario.

\subsection{Related Work}

    The use of hierarchies for integer programs dates back to the notion of
    Chv\'atal rank \cite{chvatal1973edmonds}. The {\em Chv\'atal closure} of a
    polyhedron $P$ is the polyhedron $P' \subseteq P$ which is defined by the
    system of  Chv\'atal-Gomory cutting planes obtainable from $P$. If we
    denote $P_C^1=P'$ then the hierarchy is generated by
    $P_C^{t+1}=(P_C^t)'$. Chv\'atal proved that $P \supseteq P_C^1 \supseteq
    \ldots \supseteq P_C^{n-1} \supseteq P_C^n = P_I.$
    As discussed, it is NP-hard to separate over $P^1$ both in the Chv\'atal hierachy and the knapsack hierarchy considered in this paper. Other hierarchies have since been
    introduced and widely studied, such as the hierarchy defined by the split closure \cite{cook1990chvatal}
    and hierarchies introduced by
    Lov\'asz-Schrijver \cite{lovasz1991cones}, Sherali-Adams
    \cite{sherali1990hierarchy}, Parillo \cite{parrilo2003semidefinite}, and
    Lasserre \cite{lasserre2001explicit}. These other hierachies also have that the integer hull is obtained after $n$ rounds of the hierarchy, where $n$ is the number of variables. In that sense,
    the knapsack hierarchy is different since $P^{m}=P_I$ where $m$ is the number of constraints. For ANF-tree formulations, however the number of variables equals $\r$, the number of requests.
    Moreover, for ANF-Tree instances one may show that $m\le 4\r$ (see Appendix A.3 in \cite{friggstad2015linear}). Hence
    the knapsack hierarchy is equal to the integer hull at level $O(\r)$ for ANF-Tree.

    We summarize the existing work on the effectiveness of classical hierarchies on ANF-Tree.
    Friggstad and Gao showed that the Lov\'asz-Schrijver hierarchy is ineffective at reducing
    the integrality gap of ANF-Tree after 2 rounds and amounts to adding the rank one
    constraints \cite{friggstad2015linear}.
    Additionally, Chekuri, Ene, and Korula prove that after applying $t$ rounds of the Sherali-Adams hierarchy
    to ANF-LP, the integrality gap is $\Omega(\r/t)$ \cite{chekuri2009unsplittable},
    matching the result for our hierarchy.
    For the case of 0-1 knapsack, Karlin, Mathieu, and Nguyen show that $t^2$ rounds of Lasserre reduce the
    integrality gap to $t/(t-1)$ \cite{karlin2010integrality}. We are not aware of  any work done on whether this would generalize to ANF-Tree.

    In the remainder of the paper we introduce the well-known ``bad gap instances'' for ANF: in particular, the so-called staircase and Friggstad-Gao instances (in \cref{sec:prelim}). We then prove our results (in \cref{sec:first2proofs,sec:khlb,sec:upperbound}), and discuss future work (in \cref{sec:conclusion}).

\section{Preliminary Proofs}

\label{sec:first2proofs}
We begin by showing \cref{prop:approx}, establishing tractability of our hierarchy for
constant $t$.
\begin{proof}[\cref{prop:approx}]
    We use a result by Pritchard  which gives
    a $(1-\e)$-approximate extended formulation for $K_I(S)$ with
    size $O(n^{1+t^3\epsilon^{-1}})$ \cite{pritchard2010lp}.
    For $0<\epsilon\le 1$, denote the projection of this extended formulation onto
    $\bb R^n$ by $K_{\epsilon}(S)$.
    Furthermore, denote by $P^t_{\epsilon}$ the polytope $\bigcap_{|S|=t}K_{\epsilon}(S)$.
    Since $K_{\epsilon}(S)$ is a {\em polyhedral approximation},
    we have $(1-\epsilon)K_{\epsilon}(S)\subseteq K_I(S)\subseteq K_{\epsilon}(S)$ \cite{pritchard2010lp}.
    It follows that
    \begin{align*}
        (1-\e)P^t_{\e}=(1-\e)\bigcap_{|S|=t}K_\e(S)&=
        \bigcap_{|S|=t}(1-\e)K_\e(S)\sube\bigcap_{|S|=t}K_I(S)=P^t
        \intertext{and}
        P^t=\bigcap_{|S|=t}K_I(S)&\sube\bigcap_{|S|=t}K_\e(S)=P^t_\e.
    \end{align*}
    Therefore, $P^t_\e$ is a polyhedral $(1-\e)$-approximate extended formulation for $P^t$.
    There are $\binom n t=O(n^t)$ sets $S$ with $|S|=t$,
    so since each $K_\e(S)$ has size $O(n^{1+t^3\epsilon^{-1}})$, $P^t_\e$
    has size $O(n^{1+t^3\epsilon^{-1}})\cdot O(n^t)=O(n^{t^3\epsilon^{-1}+t+1})$ as desired.\qed
\end{proof}
The proof of the upper bound in \cref{thm:generalUB} uses a strategy which appears again in the proof
of \cref{thm:fgub}: pick some $x\in P^t$ and
partition the requests into sets $S_1,\dots,S_q$ such that the profit of each $x_{S_i}$ (i.e., the vector $x$ but with elements not in $S_i$ set to zero)
can easily be bounded, thus establishing a bound on the profit of $x_{S_1}+\dots+x_{S_q}=x$.
\begin{lemma}
    \label{lem:ktupperbound}
    For ANF-Tree, the integrality gap of $P^t$ is $O(\r/t)$.
\end{lemma}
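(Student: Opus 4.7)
The plan is to follow the partitioning strategy flagged in the sentence preceding the statement. Fix any $x \in P^t$; the goal is to show $w^T x = O(\r/t)\cdot \OPT$, where $\OPT$ is the optimal integer solution value.

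First, partition $R$ arbitrarily into $q = \lceil \r/t \rceil$ blocks $S_1, \ldots, S_q$ with $|S_i|\le t$. For each block $S_i$, construct a witness edge set $T_i \subseteq E$ of size exactly $t$ such that every $r \in S_i$ traverses at least one edge in $T_i$: for each $r \in S_i$ choose any $e_r \in P_r$, set $T_i = \{e_r : r \in S_i\}$, and pad $T_i$ arbitrarily to size $t$. Since $K_I(T) \subseteq K_I(T')$ whenever $T' \subseteq T$, the hypothesis $x \in P^t$ yields $x \in K_I(T_i)$. Moreover $K_I(T_i)$ is down-monotone because zeroing any coordinate of a $0/1$ feasible point preserves feasibility (capacities and demands are nonnegative); this lifts to the convex hull, so the zero-padded vector $x_{S_i}$ (the vector $x$ with coordinates outside $S_i$ set to zero) also lies in $K_I(T_i)$.

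Next, write $x_{S_i} = \sum_j \lambda_j y^{(j)}$ as a convex combination of $0/1$ points $y^{(j)} \in K_I(T_i)$ supported in $S_i$. Each $y^{(j)}$ is the indicator of a set $F_j \subseteq S_i$ whose selection respects the capacity constraints on the edges of $T_i$, so $\sum_{r \in S_i} w_r x_r = \sum_j \lambda_j w(F_j) \le \max_j w(F_j)$. If we can show $\max_j w(F_j) = O(\OPT)$, summing across the $q$ blocks yields $w^T x \le q \cdot O(\OPT) = O(\r/t) \cdot \OPT$, which is the desired bound.

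The principal obstacle is exactly this final estimate $\max_j w(F_j) = O(\OPT)$: each $F_j$ is only guaranteed feasible with respect to the $t$ capacity constraints coming from $T_i$ and may in principle overflow capacities on edges outside $T_i$. I expect to discharge this either by a more careful, instance-aware choice of $T_i$---e.g., taking $T_i$ to consist of the $t$ most constrained edges across $\bigcup_{r \in S_i} P_r$, such as the bottleneck edge of each $r \in S_i$---so that local feasibility on $T_i$ forces global feasibility up to a constant factor, or by an explicit trimming step that converts each $F_j$ into a globally feasible subset whose total profit is within an $O(1)$ factor of $w(F_j)$, possibly invoking a known $O(1)$-approximation for ANF-Tree on the restricted instance as a black box.
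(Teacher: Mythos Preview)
Your outline is on the right track---partition requests, use $x\in K_I(T_i)$, write $x_{S_i}$ as a convex combination of $0/1$ points---but the step you flag as ``the principal obstacle'' is in fact the whole proof, and neither of your proposed fixes works. A trimming step that converts a locally feasible $F_j$ into a globally feasible set losing only an $O(1)$ factor is, in general, precisely an $O(1)$-approximation for ANF-Tree, which is not known to exist (the best known is $O(\log^2 k)$). And picking one ``bottleneck'' edge per request is not enough: a set $F_j\subseteq S_i$ can satisfy all $t$ chosen capacity constraints and still violate some other edge shared by several requests in $S_i$, and there is no constant-factor repair for this in general.

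The missing idea is structural: for any ANF-Tree instance one can pass to an equivalent instance (same integer hull) with at most $4k$ edges (see Appendix~A.3 of \cite{friggstad2015linear}). Consequently, if you take blocks of size $|S_i|\le t/4$ rather than $t$, the sub-instance induced by $S_i$ has at most $t$ edges. Let $T_i$ be \emph{all} of those edges. Then $K_I(T_i)$ restricted to $\{x:x_{R\setminus S_i}=0\}$ is exactly the integer hull of the sub-instance, so every $0/1$ point $F_j$ in your convex decomposition is globally feasible and hence $w(F_j)\le\OPT$ with no loss. You pay only in the number of blocks, which becomes $O(k/(t/4))=O(k/t)$. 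In short: don't try to pick $t$ clever edges out of many; shrink the blocks so that $t$ edges suffice to describe the entire sub-instance.
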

\begin{proof}
    Let $\OPT$ be the optimal value of $\max\{w^Tx:x\in P_I\}$.
    Consider some set $S\subseteq R$ with $|S|\le t/4$.
    It can be shown that for any ANF-Tree instance
    there exists an equivalent instance (in the sense of having the same integer hull)
    with $m\le4\r$ (see Appendix A.3 in \cite{friggstad2015linear}),
    so we assume w.l.o.g.~that $m\le4\r$.
    So, if the problem is reduced to
    contain only the requests in $S$, then at most $t$ edges
    are needed to define the integer hull.
    Let $T$ be this set of at most $t$ edges.
    Then, we must have $\max\{w^Tx:x\in K_I(T),\,x_{R\setminus S}=0\}\le\OPT$
    because any such $x$ is in $P_I$.

    We can arbitrarily partition the requests into $q:=O(\frac{\r}{4t})$ such sets $S_1,\dots,S_q$ (i.e., with $S_i\subseteq R$ and $|S_i|\le t/4$)
    with corresponding edge sets $T_1,\dots,T_q$ (i.e., where $|T_i|\le t$ and $T_i$ defines the integer hull for requests $S_i$).
    Then $\max\{w^Tx:x\in P^t\}\le\sum_{i=1}^q\max\{w^Tx:x\in K_I(T_i),\,x_{R\setminus  {S_i}}=0\}\le O(\frac{\r}{4t}\OPT)$,
    so the integrality gap is $O(\r/t)$ as desired.\qed
\end{proof}

\section{ANF-Tree Preliminaries}
\label{sec:prelim}

\subsection{Staircase Instances}
\label{sec:staircase}
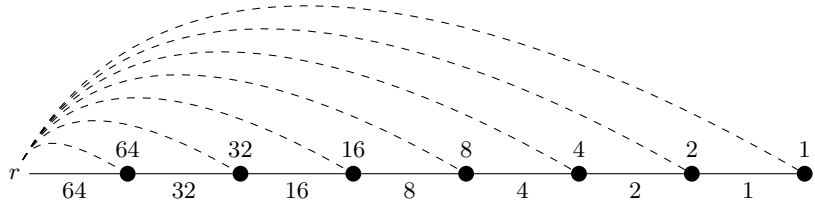
\begin{figure}[t]
    \centering
    \begin{tikzpicture}[
        dot/.style = {circle, fill, minimum size=6pt, inner sep=0pt, outer sep=0pt}
        ]
        \node [] (r) {$r$}
            child[grow=right]{ node [dot,label=above:$64$] (c1) {}
                child{ node [dot,label=above:$32$] (c2) {}
                    child{ node [dot,label=above:$16$] (c3) {}
                        child{ node [dot,label=above:$8$] (c4) {}
                            child{ node [dot,label=above:$4$] (c5) {}
                                child{ node [dot,label=above:$2$] (c6) {}
                                    child{ node [dot,label=above:$1$] (c7) {}
                                    edge from parent node[draw=none,below] {$1$}}
                                edge from parent node[draw=none,below] {$2$}}
                            edge from parent node[draw=none,below] {$4$}}
                        edge from parent node[draw=none,below] {$8$}}
                    edge from parent node[draw=none,below] {$16$}}
                edge from parent node[draw=none,below] {$32$}}
            edge from parent node[draw=none,below] {$64$} };
    \path [draw,dashed] (r) to[out=60,in=150] (c1) node[] {};
    \path [draw,dashed] (r) to[out=60,in=150] (c2) node[] {};
    \path [draw,dashed] (r) to[out=60,in=150] (c3) node[] {};
    \path [draw,dashed] (r) to[out=60,in=150] (c4) node[] {};
    \path [draw,dashed] (r) to[out=60,in=150] (c5) node[] {};
    \path [draw,dashed] (r) to[out=60,in=150] (c6) node[] {};
    \path [draw,dashed] (r) to[out=60,in=150] (c7) node[] {};
    \end{tikzpicture}
    \caption[Instance $S^7$.]{$S^7$ is an ANF-Path instance with $8$ vertices and $7$ requests.
        Each vertex marked with a bullet ($\bullet$) is associated with a request
        (dashed line) which routes between that vertex and $r$.
        The value under each edge denotes the capacity of that edge,
        and the value above each vertex denotes the demand of the request associated
    with that vertex. All requests have profit $1$.}
    \label{fig:badpath}
\end{figure}

For $\r\ge1$, we define the {\em staircase}\footnote{In the literature, this
    instance is referred to as a staircase because of a common way of visualizing ANF-Path instances
where the capacity is plotted above the vertices on the Y axis.}
ANF-Path instance $S^\r=(T,R)$ as follows.
Let $T$ be a path graph on $\r+1$ vertices, that is, $V=\{1,\dots,\r+1\}$ and $E=\{(1,2),(2,3),\dots,(\r,\r+1)\}$.
We refer to vertex $1$ as the root or $r$.
For each $i=1,\dots,\r$, define $u_{(i,i+1)}=2^{i-1}$
and create a request $i$ with $s_i=i$, $t_i=r$, $d_i=2^{i-1}$, and $w_i=1$.
See \cref{fig:badpath} for an illustration.
These instances were first described by Chakrabarti, Chekuri, Gupta, and Kumar \cite{chakrabarti2002approximation}.

\subsection{Friggstad-Gao Instances}
\label{sec:fg}

In this section, we describe the family of Friggstad-Gao ANF-Tree instances, which were introduced in \cite{friggstad2015linear}.

\begin{figure}[t]
    \centering
    \begin{tikzpicture}[
        level 2/.style={sibling distance = 2cm},
        level 3/.style={sibling distance = 0.5cm},
        level/.style={level distance = 1.5cm},
        dot/.style = {circle, fill, minimum size=6pt, inner sep=0pt, outer sep=0pt}
        ]
        \node [] (r) {$r$}
            child{ node [dot] (c3) {}
                child{ node [dot] {}
                    child{ node [dot] {}
                        child [grow=left] {node (q) {$2^{3}-2^{0}$} edge from parent[draw=none]
                            child [grow=up] {node (q) {$2^{6}-2^{3}$} edge from parent[draw=none]
                                child [grow=up] {node (q) {$2^{9}-2^{6}$} edge from parent[draw=none]
                                    child [grow=up] {node (q) {$d_v$} edge from parent[draw=none]
                        }}}
                        child [grow=left] {node (q) {$2^{-4}$} edge from parent[draw=none]
                            child [grow=up] {node (q) {$2^{-2}$} edge from parent[draw=none]
                                child [grow=up] {node (q) {$2^{0}$} edge from parent[draw=none]
                                    child [grow=up] {node (q) {$w_v$} edge from parent[draw=none]
                    }}}}
                        child [grow=right,yshift=0.75cm] {node (q) {$2^{3}$} edge from parent[draw=none]
                                child [grow=up] {node (q) {$2^{6}$} edge from parent[draw=none]
                                    child [grow=up] {node (q) {$2^{9}$} edge from parent[draw=none]
                                        child [grow=up,yshift=-0.8cm] {node (q) {$u_e$} edge from parent[draw=none]
                        }}}}}
                    }
                    child{ node [dot] {}}
                    child{ node [dot] {}}
                    child{ node [dot] {}}
                }
                child{ node [dot] {}
                    child{ node [dot] {}}
                    child{ node [dot] {}}
                    child{ node [dot] {}}
                    child{ node [dot] {}}
                }
                child{ node [dot] {}
                    child{ node [dot] {}}
                    child{ node [dot] {}}
                    child{ node [dot] {}}
                    child{ node [dot] {}}
                }
                child{ node [dot] (c2) {}
                    child{ node [dot] {}}
                    child{ node [dot] {}}
                    child{ node [dot] {}}
                    child{ node [dot] (c1) {}
                        }
                }
            };
    \end{tikzpicture}
    \caption[Instance $T_{FG}^3$.]{ANF-Tree Instance $T_{FG}^3$. Each vertex marked with a bullet ($\bullet$) is associated with a request which terminates at $r$.
    The values on the left indicate the profits/demands/capacities associated with each level.}
    \label{fig:fginstance}
\end{figure}
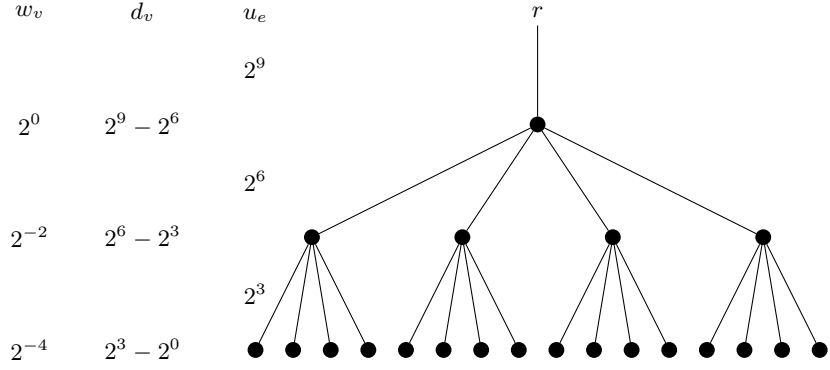

We define the tree $T_{FG}^h$ with height $h\ge2$ as follows. There is a root
vertex $r$ which has a single child $v_1$. Apart from $r$ and the leaves (which are in level $h$), all vertices have
$2^{h-1}$ children. We denote the set of vertices with distance $\ell$ from $r$ by
$\level_\ell$, that is, $\level_0=\{r\}$, $\level_1=\{v_1\}$, and for $\ell\in[h]$,
$|\level_\ell|=2^{(h-1)(\ell-1)}$. For each edge $e=uv$ with $u\in\level_{\ell-1}$ and
$v\in\level_\ell$, define $u_e=2^{h(h-\ell+1)}$. For all $\ell\ge1$ and each
vertex $v\in\level_\ell$, create a request associated with $v$ with $s_v=v$, $t_v=r$, demand
$d_v=2^{h(h-\ell+1)}-2^{h(h-\ell)}$, and profit $w_v=2^{-(h-1)(\ell-1)}$.
See \cref{fig:fginstance} for an example.
This defines a {\em single-sink} instance since every request terminates at $r$.
Moreover, since the profit of each request in any level is the inverse of the number
of requests in that level, the total profit of requests in any level is exactly $1$.
A simple calculation shows that the number of requests (and
equivalently the number of edges) in levels $0$ through $\ell$ is
\begin{equation*}
    \label{eqn:levelnodes}
    n(\ell)=\sum_{i=0}^{\ell-1} (2^{h-1})^i = \frac{2^{(h-1)\ell}-1}{2^{h-1}-1}=\Theta\left(2^{(h-1)(\ell-1)}\right).
\end{equation*}
Thus, $h=\Theta(\sqrt{\log \r})$ where $\r=n(h)$ is the total number of requests/edges,
and for any $\ell$ we have $\ell=\Theta\left(\frac{\log n(\ell)}h\right)$.

The following lemmas establish some fundamental properties of these instances.
We use $T^{<v}$ to denote the requests in the subtree of $T$ rooted at $v$ with $v$ itself removed.

\begin{lemma}
    \label{subtree_lem}
    For any edge $e=uv$ where $u\in\level_{\ell-1}$ and
    $v\in\level_{\ell}$ for some $\ell$, the set of requests in $T^{<v}$ is routable on $e$. That is,
    $1_{T^{<v}} \in K_I(e)$.
\end{lemma}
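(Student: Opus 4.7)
The plan is to observe that since every request $r' \in T^{<v}$ has $t_{r'} = r$ and $s_{r'}$ lying in the subtree strictly below $v$, its unique path $P_{r'}$ traverses the edge $e = uv$. Hence the only condition needed for $1_{T^{<v}} \in K_I(e)$ is the single capacity inequality $\sum_{r' \in T^{<v}} d_{r'} \leq u_e = 2^{h(h-\ell+1)}$; all other edge constraints in the subtree are handled by considering strictly smaller subtrees and are vacuous for $e$ itself.

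First I would try a direct count using the level structure. The subtree rooted at $v$ (with $v$ itself removed) contains exactly $(2^{h-1})^{\ell'-\ell}$ vertices at each level $\ell' \in \{\ell+1,\dots,h\}$, each carrying a request of demand $2^{h(h-\ell'+1)} - 2^{h(h-\ell')}$. Substituting $j = \ell'-\ell$ and simplifying the exponents, the total contribution of level $\ell'$ collapses to $2^{-j}\bigl(2^{h(h-\ell+1)} - 2^{h(h-\ell)}\bigr)$, so the total demand equals $\bigl(2^{h(h-\ell+1)} - 2^{h(h-\ell)}\bigr)\sum_{j=1}^{h-\ell} 2^{-j}$, which is strictly less than $2^{h(h-\ell+1)} = u_e$ since the geometric sum is at most $1$.

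A cleaner alternative is induction on $\ell$ from the leaves upward. Let $\phi(\ell)$ denote the subtree-minus-root demand at any level-$\ell$ vertex (which depends only on $\ell$ by symmetry of the construction). Then $\phi(h) = 0$, and grouping over the $2^{h-1}$ children of $v$ yields the recurrence $\phi(\ell) = 2^{h-1}\bigl[\phi(\ell+1) + 2^{h(h-\ell)} - 2^{h(h-\ell-1)}\bigr]$. The bound $\phi(\ell) \leq 2^{h(h-\ell+1)}$ drops out of the inductive hypothesis $\phi(\ell+1) \leq 2^{h(h-\ell)}$ together with the identity $2^{h-1} \cdot 2 \cdot 2^{h(h-\ell)} = 2^{h(h-\ell+1)}$.

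There is no real obstacle beyond keeping the exponents straight; this is a direct algebraic verification. The one conceptual point worth flagging is that the lemma is stated for $T^{<v}$ with $v$'s \emph{own} request excluded, and this matters: the demands were chosen precisely so that $u_e - d_v = 2^{h(h-\ell)}$ equals the capacity of the edge just below $v$, which is why the inequality holds with room to spare and why it is really the same statement applied one level lower, scaled up by the branching factor.
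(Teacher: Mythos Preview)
Your first argument is correct and is essentially the paper's own proof: a level-by-level count of the $(2^{h-1})^{\ell'-\ell}$ requests in each level $\ell'>\ell$, followed by summing the resulting geometric series in $2^{-(\ell'-\ell)}$ and comparing to $u_e=2^{h(h-\ell+1)}$. The only cosmetic difference is that the paper replaces the exact demand $2^{h(h-\ell'+1)}-2^{h(h-\ell')}$ by the upper bound $2^{h(h-\ell'+1)}$ before summing, whereas you keep the factor $(1-2^{-h})$ and get the tighter bound $u_e-d_v$; your inductive alternative is a nice extra but not needed.
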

\begin{proof}
    In any level $\ell'>\ell$,
    $2^{h(h-\ell'+1)}$ is an upper bound for the demand
    $2^{h(h-\ell'+1)}-2^{h(h-\ell')}$ of the requests.  The number of vertices
    in $\level_k$ that are also in the subtree below $e$ is $2^{(h-1)(\ell'-\ell)}$.
    Thus, the demand on $e$ from routing all requests in $\level_{\ell'}$ is at most
    $2^{(h-1)(\ell'-\ell)}2^{h(h-\ell'+1)} = 2^{h^2-h\ell-\ell'+\ell+h}$. Therefore, summing over all
    $\ell'>\ell$, we have
    \begin{align*}
        \textstyle\sum_{\ell'=\ell+1}^h2^{h^2-h\ell-\ell'+\ell+h}
        &=2^{h^2-h\ell+\ell+h}\textstyle\sum_{\ell'=\ell+1}^h2^{-\ell'}\\
        &=2^{h^2-h\ell+\ell+h}(2^{-\ell}-2^{-h})\\
        &\le2^{h(h-\ell+1)}=u_e.\tag*{\qed}
    \end{align*}
\end{proof}

\begin{lemma}
\label{lem:path}
Let $r$ be the root of the tree $T_{FG}^h$ and $P$ be any path from $r$ to a leaf.
Then the demands for the requests of $P$ form a routable set.
\end{lemma}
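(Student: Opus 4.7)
The plan is to observe that the path $P$ from the root $r$ to a leaf visits exactly one vertex $v_\ell \in \level_\ell$ for each level $\ell \in [h]$ (since the root $r$ is in $\level_0$, has a unique child in $\level_1$, and then $P$ descends once per level). The only requests corresponding to vertices on $P$ are the $h$ requests $v_1,\ldots,v_h$, each of which has its ``source'' endpoint at $v_\ell$ and sink at $r$, and hence uses precisely the edges of $P$ from $r$ down to $v_\ell$.

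Next I would fix an arbitrary edge $e = u_{\ell-1}v_\ell$ of $P$ with $v_\ell \in \level_\ell$ and compute the total demand on $e$ from the requests along $P$. Since $v_{\ell'}$ routes through $e$ iff $\ell' \ge \ell$, the total demand on $e$ is
\[
\sum_{\ell'=\ell}^{h} d_{v_{\ell'}} = \sum_{\ell'=\ell}^{h}\bigl(2^{h(h-\ell'+1)} - 2^{h(h-\ell')}\bigr).
\]
This is a telescoping sum: reindexing with $k = h-\ell'+1$ gives $\sum_{k=1}^{h-\ell+1}\bigl(2^{hk} - 2^{h(k-1)}\bigr) = 2^{h(h-\ell+1)} - 1$, which is strictly less than the capacity $u_e = 2^{h(h-\ell+1)}$.

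Since this holds for every edge on $P$, the set of requests along $P$ is routable. There is no real obstacle here — the only thing to be careful about is making sure the telescoping indexing lines up correctly and to note that the demands $d_{v_\ell}$ were defined precisely so that consecutive-level demands differ by the capacity ratio, which is what makes the sum collapse. In fact, this construction is essentially the staircase instance $S^h$ (rescaled by a factor of $2^{h-1}$ in the exponents) restricted to the root-to-leaf path, which independently confirms that the set is routable.
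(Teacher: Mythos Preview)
Your proof is correct and follows essentially the same telescoping argument as the paper. The only difference is cosmetic: the paper verifies the telescoping sum explicitly only for the topmost edge (obtaining $2^{h^2}-1<2^{h^2}$) and then appeals to the self-similar structure of $T_{FG}^h$ for the remaining edges, whereas you carry out the telescope directly for an arbitrary edge at level $\ell$, which is slightly more self-contained.
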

\begin{proof}
The requests associated with level $\ell$ have demand $2^{h(h-\ell+1)}-2^{h(h-\ell)}$,
so the total demand of such a path is
\begin{align*}
    \sum_{\ell=1}^h 2^{h(h-\ell+1)}-2^{h(h-\ell)}&=2^{h^2}-2^{h(h-1)}
                                        +2^{h(h-1)}-2^{h(h-2)}
                                        +\dots
                                        +2^{2h}-2^{h}
                                        +2^{h}-2^{0}\\
                                        &=2^{h^2}-1.
\end{align*}
This is less than $2^{h^2}$, the capacity of the topmost edge.
A similar argument shows that no other edges are violated by leveraging the self-similar structure of the tree.\qed
\end{proof}

\begin{lemma}
\label{lem:singledge}
    The vector $\frac{1}{2}$ is in $K_I(e)$ for  every edge $e$.
\end{lemma}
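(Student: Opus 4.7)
The plan is to exhibit $\tfrac12 \mathbf{1}_R$ as a convex combination of two integer points of $K(e)$. Fix an edge $e = uv$ with $u \in \level_{\ell-1}$ and $v \in \level_\ell$. The crucial observation is that $R_e$ decomposes naturally into two parts: the single request $v$ (the one with endpoint at $v$), and the requests in the strict subtree $T^{<v}$. Every request through $e$ has its non-root endpoint in $\{v\} \cup T^{<v}$, and conversely every such request uses $e$, so $R_e = \{v\} \cup T^{<v}$ (a disjoint union).

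Next I would check that each of the two integer vectors $\mathbf{1}_{\{v\}}$ and $\mathbf{1}_{T^{<v}}$ is feasible on edge $e$. For $\mathbf{1}_{\{v\}}$, the demand on $e$ is $d_v = 2^{h(h-\ell+1)} - 2^{h(h-\ell)} < u_e$, so it is feasible. For $\mathbf{1}_{T^{<v}}$, feasibility on $e$ is exactly the content of \cref{subtree_lem}. Since the two supports partition $R_e$, the two vectors sum to $\mathbf{1}_{R_e}$, giving $\tfrac12 \mathbf{1}_{R_e} = \tfrac12 \mathbf{1}_{\{v\}} + \tfrac12 \mathbf{1}_{T^{<v}}$ on coordinates inside $R_e$.

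To extend this to the coordinates in $R \setminus R_e$, note that $K_I(e)$ imposes no constraint on requests outside $R_e$, so any $\{0,1\}$-assignment to those coordinates yields a feasible integer point (the only relevant inequalities for $K_I(e)$ are those implied by $\sum_{r \in R_e} d_r x_r \le u_e$ together with the box constraints). I would therefore partition $R \setminus R_e$ arbitrarily into two sets $A_1, A_2$ and take
\[
y_1 := \mathbf{1}_{\{v\} \cup A_1}, \qquad y_2 := \mathbf{1}_{T^{<v} \cup A_2}.
\]
Both $y_1, y_2$ are integer points of $K(e)$ by the preceding paragraph, and one checks coordinate by coordinate that $\tfrac12 y_1 + \tfrac12 y_2 = \tfrac12 \mathbf{1}_R$. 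Hence $\tfrac12 \in \operatorname{conv}(K(e) \cap \{0,1\}^n) = K_I(e)$, as required.

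There is no real obstacle here; the only thing to verify carefully is that the decomposition $R_e = \{v\} \sqcup T^{<v}$ is genuinely a partition (so that the two halves of the convex combination hit each coordinate of $R_e$ exactly once) and that demands on $e$ from requests outside $R_e$ are zero, so that the free $\{0,1\}$-choices on $R \setminus R_e$ really do preserve feasibility.
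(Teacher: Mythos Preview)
Your proof is correct and follows essentially the same approach as the paper: fix $e=uv$, split $R_e$ into $\{v\}$ and $T^{<v}$, observe that $1_{\{v\}}$ is feasible on $e$ since $d_v<u_e$ and that $1_{T^{<v}}$ is feasible on $e$ by \cref{subtree_lem}, and take the midpoint. The only difference is cosmetic: the paper simply notes that coordinates outside $R_e$ are irrelevant for $K_I(e)$ and reduces to showing $\tfrac12(1_{\{v\}}+1_{T^{<v}})\in K_I(e)$, whereas you explicitly pad with a partition $A_1\cup A_2$ of $R\setminus R_e$ to get $\tfrac12$ on every coordinate.
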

\begin{proof}
    Consider any edge $e=uv$ where $u\in\level_{\ell-1}$ and
    $v\in\level_{\ell}$.
    The only requests which route on $e$ are those in the
    subtree rooted at $v$. Therefore it is sufficient
    to show that $b:=\frac{1}{2} (1_{\{v\}} + 1_{T^{<v}}) \in K_I(v)$.
    Note that $1_{\{v\}}$ is in $K_I(e)$
    since \[d_{v}=2^{h(h-\ell+1)}-2^{h(h-l)}<2^{h(h-\ell+1)}=u_e,\]
    and by \autoref{subtree_lem}, we  have that $1_{T^{<v}} \in K_I(e)$. It follows that any convex combination of these vectors, and hence $b$, lies in $K_I(e)$.\qed
\end{proof}

\section{Integrality Gap Lower Bound}
\label{sec:khlb}

In \cref{sec:pathlb}, we
prove a lower bound of $\Omega(k/t)$ on the integrality gap of $P^t$, matching the upper bound shown in \cref{lem:ktupperbound} and thus proving \cref{thm:generalUB}.
However, this lower bound does not hold for $P^t_{\rank}$.
To resolve this case, we show in \cref{sec:treelb}  that on the
Friggstad-Gao tree instances, for any $c>0$ the
integrality gap is reduced to $\Omega(1/c)$ for both $P^{\r^c}$ and $P^{\r^c}_{\rank}$, despite
that $P_{\rank}$ has integrality gap $\Omega(\sqrt{\log \r})$ for these instances.

In the following we assume that all requests are routable on their own, i.e.,
for each $r\in R$ and $e\in P_r$, $d_r\le u_e$.
We also assume that it is impossible to route all requests together,
as the optimal solution would then be trivial.

\subsection{Path Instances}
\label{sec:pathlb}

For path instances, it is known that the integrality gap of $P_{\rank}$ is $O(\log \r)$
and it is conjectured to be $O(1)$ \cite{chekuri2009unsplittable}.
However, the ANF-LP has an integrality gap of $\Omega(\r)$,
which is evidenced by the staircase instances $S^\r$ \cite{chakrabarti2002approximation}.
We now prove the upper bound from \cref{thm:generalUB} by showing that the integrality gap of $P^t$ is $\Omega(\r/t)$.

\begin{lemma}
    \label{lem:ktlowerbound}
    The integrality gap of $P^t$ is $\Omega(\r/t)$.
\end{lemma}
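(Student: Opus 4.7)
The plan is to use the staircase instance $S^\r$ of \cref{sec:staircase}. Because the staircase has nested powers-of-two demands and capacities with $d_i=u_{e_i}$, any two distinct requests together exceed the bottleneck of the more-restrictive one, so the integer optimum is $1$. It therefore suffices to exhibit a point $x^\ast\in P^t$ with $\sum_i x^\ast_i=\r/(t+1)=\Omega(\r/t)$.

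The candidate is $x^\ast=\tfrac{1}{t+1}\mathbf{1}$. To show $x^\ast\in P^t$ I must verify $x^\ast\in K_I(S)$ for every edge set $S\subseteq E$ with $|S|=t$; I will do so by writing $x^\ast$ as an explicit convex combination of indicators of $t+1$ feasible integer solutions of $K(S)$. Using the request-edge bijection $d_i=u_{e_i}$, I write $I_S=\{i_1,\dots,i_t\}\subseteq[\r]$ for the request indices of the edges in $S$.

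The $t+1$ integer solutions I choose are $J_0:=[\r]\setminus I_S$ and $J_l:=\{i_l\}$ for $l=1,\dots,t$. Each singleton $J_l$ is feasible for $K(S)$ because a single request uses at most its own bottleneck capacity on every edge of its path (and nothing elsewhere). For $J_0$: on any edge $e_{i_l}\in S$, the requests of $J_0$ that use $e_{i_l}$ exclude $i_l$ itself and thus consist only of requests whose demand is a strictly smaller power of $2$ than $u_{e_{i_l}}=d_{i_l}$; their combined demand telescopes to at most $u_{e_{i_l}}-1<u_{e_{i_l}}$, so $J_0$ is feasible.

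Finally, placing weight $\tfrac{1}{t+1}$ on each of $J_0,J_1,\dots,J_t$ yields a convex combination with marginal $\tfrac{1}{t+1}$ on every coordinate, since each index $i\in[\r]$ lies in exactly one of these $t+1$ disjoint sets. Hence $x^\ast\in K_I(S)$ for every such $S$, so $x^\ast\in P^t$, giving integrality gap $\r/(t+1)=\Omega(\r/t)$. I expect the main technical step to be the feasibility check for $J_0$, which rests on the geometric-series identity particular to the staircase; the singletons and the marginal count are routine.
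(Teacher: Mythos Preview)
Your proof is correct and follows essentially the same approach as the paper: exhibit $\tfrac{1}{t+1}\mathbf{1}\in P^t$ on the staircase instance $S^\r$ by, for each size-$t$ edge set $S$, partitioning the requests into the $t$ singletons indexed by $S$ together with the complement, and using the geometric-series bound to certify the complement is $S$-feasible. The only addition is that you make explicit the fact that the integer optimum equals $1$, which the paper leaves implicit.
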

\begin{proof}
    Let $t>1$. We show that $\frac1{t+1}\in P^t$ for instances $S^\r$, as defined in \cref{sec:staircase}.
    Let $S\subseteq E(S^\r)$ with $|S|=t$. For each edge $(i,i+1)\in S$,
    request $i$ is feasible alone. All other requests are feasible together without violating this edge's capacity,
    because any other request $j$ which routes on $(i,i+1)$ has demand $2^j$, edge $(i,i+1)$
    has capacity $2^i$, and $\sum_{j=0}^{i-1}2^j<2^i$.
    These feasible sets define a partition of $R(S^\r)$ into  $t+1$ sets: a set for each of the requests with
    the same indices as the $t$ edges of $S$ and a
    set of all other requests. Since all of these sets are feasible,
    the indicator vector for each of these sets lies in
    $K_I(S)$. Since these sets partition $R(S^\r$), the vector $\frac1{t+1}$  is a convex combination
    of these sets, and hence $\frac1{t+1} \in K_I(S)$.
    Since this holds for every such $S$, we have $\frac1{t+1}\in P^t$ and its total profit
    is $\Omega(\r/t)$, thus establishing the integrality gap.\qed
\end{proof}

\subsection{Tree Instances}
\label{sec:treelb}

In this section, we prove \cref{thm:fglb}, which gives a lower bound on the integrality gap
of $P^t$ on instances $T:=T_{FG}^h$.
Recall that \cref{lem:singledge} establishes $1/2\in P^1$
by proving that for each edge $e$, the $1/2$ vector can be written as a
convex combination of (incidence vectors of) two sets, each of which is
routable on $e$.
We generalize this to any value of $t$ by showing that for $1/c \in P^t$ for
sufficiently small $c$, and thus the integrality gap is
$\Omega(\sqrt{\log \r}/c)$.

Let $S\subseteq E(T)$.
We call a set $X \subseteq R(T)$  {\em $S$-routable} if
$\forall e\in S$, $\sum_{i\in X\cap R(e)} d_i\le u_e$.  Our
key structural result gives a condition when we can express a vector
$1/c$ as a convex combination of $S$-routable sets.

We cast this  convex combination question as a question of colouring the set of
all requests.  For $S \subseteq E(T)$, we define
the {\em $S$-chromatic number}, denoted by $\chi(S)$, to be the minimum value
$c$ such that $R(T)$ can be partitioned into $c$ sets, each of which
is $S$-routable. Given such a partition, the vector $1/c$ is
trivially a convex combination of the indicator vectors of the
$S$-routable sets in the partition.  Thus, if we can show that
$\chi(S) \leq c$ for every $|S|=t$, we have guaranteed that
$1/c \in P^t$.
Hence, the integrality gap established by Friggstad and Gao decreases by at most a
factor of $c/2$  for $P^t$, since the result of Friggstad and Gao is associated with the feasible
vector $1/2 \in P^0$. In fact, the following holds even if we start with the stronger
formulation $P_{\rank}$; we explain why at the end of this section.

\begin{observe}
\label{obs:colour}
The integrality gap of $P^t$ is
$\Omega(h/c)$, where $c(t) :=\max \{ \chi(S): S\sube R, |S|=t \}$.
\end{observe}

\noindent\cref{thm:fglb} follows from the following proposition which the rest of this section is dedicated to proving.

\begin{proposition}
\label{prop:colouring}
If $|S| \leq 2^{h(c-1)}$, then $\chi(S) \leq c+1$.
\end{proposition}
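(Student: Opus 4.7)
The plan is to construct an explicit $(c+1)$-colouring of $R(T)$ and verify $S$-routability edge-by-edge. Set $V_0 = \{v_e : e \in S\}$, the lower endpoints of the edges of $S$; since distinct edges of $T$ have distinct lower endpoints, $|V_0| = |S| \le 2^{h(c-1)}$. For $v \in V_0$, let $d(v)$ count the elements of $V_0$ strictly above $v$ in $T$. I would assign colour $0$ to every request in $R \setminus V_0$ and colour $1 + (d(v) \bmod c) \in \{1,\dots,c\}$ to each $v \in V_0$, giving at most $c+1$ colour classes.

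To check $S$-routability, fix $e \in S$ and write $\ell = \ell(v_e)$. For any colour class $C$ not containing $v_e$ (in particular colour $0$), the requests of $C$ that route on $e$ all lie in $T^{<v_e}$, so \cref{subtree_lem} bounds their demand by $\sum_{w \in T^{<v_e}} d_w \le u_e$. The interesting case is the unique colour $C^\star$ that contains $v_e$: its demand on $e$ equals $d_{v_e} + \Sigma$ with $\Sigma = \sum_{v' \in C^\star \cap V_0 \cap T^{<v_e}} d_{v'}$. For any such $v'$, one has $d(v') \ge d(v_e)+1$ (because $v_e$ is a $V_0$-ancestor of $v'$) and $d(v') \equiv d(v_e) \pmod{c}$, so in fact $d(v') \ge d(v_e) + c$. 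Since the $d(v')-d(v_e)-1 \ge c-1$ intermediate $V_0$-vertices between $v_e$ and $v'$ occupy distinct $T$-levels, this forces $\ell(v') \ge \ell + c$ and hence $d_{v'} < 2^{h(h-\ell-c+1)}$. Combining with $|V_0| \le 2^{h(c-1)}$,
\[
\Sigma \;<\; 2^{h(c-1)} \cdot 2^{h(h-\ell-c+1)} \;=\; 2^{h(h-\ell)} \;=\; u_e - d_{v_e},
\]
so $C^\star$'s demand on $e$ is strictly below $u_e$. This gives $\chi(S) \le c+1$.

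The main obstacle is the case of $C^\star$: one must make the count bound $|V_0| \le 2^{h(c-1)}$ and the per-term magnitude bound interact so that they exactly offset the $2^{hc}$-scale ratio between $u_e$ and the demands that live $c$ levels below $v_e$. The design choice that makes this work is colouring by $V_0$-depth modulo $c$ rather than by $T$-level modulo $c$ (the first instinct): the $V_0$-depth modulus converts $d(v') \equiv d(v_e) \pmod c$ into a $T$-level gap of $c$ via the pigeonhole on intermediate $V_0$-vertices, whereas a plain $T$-level modulus offers no useful bound on the number of level-$(\ell+k)$ vertices in $V_0 \cap T^{<v_e}$.
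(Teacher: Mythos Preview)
Your argument is correct and is a genuinely different, more direct route than the paper's. The paper first builds the \emph{closure} $T'$ of $S$ (all edges on root-paths to $S$-endpoints plus the parent edge of the least common ancestor), verifies that each level of $T'$ has at most $|S|\le 2^{h(c-1)}$ vertices, and then invokes an inductive lemma (\cref{lem:tree}) which produces a \emph{layered} $c$-colouring of $R(T')$ by $T'$-layer modulo $c$; the $(c+1)^{\text{st}}$ colour absorbs $R(T)\setminus R(T')$. Your proof avoids both the closure construction and the induction: you leave everything outside $V_0$ in a single colour class (handled by \cref{subtree_lem}) and colour $V_0$ itself by $V_0$-depth modulo $c$. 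The crucial point---which you identify---is that $V_0$-depth modulo $c$ forces a $T$-level gap of at least $c$ between $v_e$ and any same-colour $V_0$-descendant, because the intermediate $V_0$-ancestors occupy distinct $T$-levels on the $v_e$--$v'$ path; this is exactly what is needed to make the cardinality bound $|V_0|\le 2^{h(c-1)}$ cancel the demand scale $2^{h(h-\ell-c+1)}$ and fit under $u_e-d_{v_e}=2^{h(h-\ell)}$. The paper's approach gives a reusable structural lemma about subtrees with bounded level-width, while yours is shorter and self-contained; both yield the same $c+1$ bound.
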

\begin{proof}[\cref{thm:fglb}]
    For constant $t$ and $S\sube R$ with $|S|=t$, $\chi(S)\le2$ for sufficiently large $h$.
    Thus, the integrality gap of $P^t$ is $\Omega(h)=\Omega(\sqrt{\log k})$.

    Now consider some $d>0$ and let $t=k^d$. Let $S\sube R$ with $|S|=t$.
    Then, if $|S| \leq 2^{h(c-1)}$, we have $c=\Omega(\log(k^d)/h)$.
    Hence, $\chi(S)=\Omega(\log(k^d)/h))$, so by Observation~\ref{obs:colour},
    the integrality gap is $\Omega(h^2/\log(k^d))=\Omega(1/d)$.

    To establish that this lower bound holds even when all rank inequalities are added,
we use Theorem 5 from \cite{friggstad2015linear} which
proves that $x/9$ satisfies
all rank constraints if $x$ satisfies all valid constraints of the form $x_i+x_j\le 1$;
these are trivially satisfied by the vector $1/c$ for $c\ge2$.\qed
\end{proof}

Our proof of \cref{prop:colouring} is based on the following colouring result.
The tree $T'$ plays the role of a subtree essentially induced by the edges from some set $S$
with $|S|=t$.
\begin{lemma}
\label{lem:tree}
Let  $T'$ be a subtree of $T$ rooted at some vertex $v$. If  each level of
$T'$ has at most $2^{h(c-1)}$ vertices, then $V(T')$ can be partitioned into
at most $c$ sets which are $E(T')$-routable.
\end{lemma}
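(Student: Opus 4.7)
The natural approach is induction on $c$. For the base case $c=1$, the hypothesis forces each level of $T'$ to contain at most $2^0=1$ vertex, so $T'$ is a single path descending from $v$. By the same telescoping identity as in the proof of \cref{lem:path}, the total demand of the requests along this subpath on any edge $e\in E(T')$ at level $\ell$ is $2^{h(h-\ell+1)}-2^{h(h-\ell^*)}<u_e$, where $\ell^*$ is the deepest level of $T'$, so $V(T')$ taken as a single colour class is $E(T')$-routable.

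For the inductive step ($c\ge 2$), assuming the result for $c-1$, I would construct one colour class $C_1\subseteq V(T')$ satisfying (i) $C_1$ is $E(T')$-routable and (ii) every level of $V(T')\setminus C_1$ contains at most $2^{h(c-2)}$ vertices. Applying the inductive hypothesis to each maximal subtree of $T'$ induced on $V(T')\setminus C_1$ then produces $c-1$ additional colour classes, and combining these with $C_1$ partitions $V(T')$ into $c$ $E(T')$-routable sets. Some care is required to verify routability of the combined classes across edges of $E(T')$ incident to $C_1$-vertices, which may necessitate a mildly strengthened induction hypothesis---for example, tracking each colour class's demand on the unique edge entering each residual subtree from above, which is already bounded via \cref{subtree_lem}.

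The chief obstacle is the explicit construction of $C_1$: condition (ii) requires $C_1$ to absorb roughly a $(1-2^{-h})$-fraction of each level of $T'$, while (i) caps $C_1$'s demand on each edge at $u_e$, which by the telescoping bound of \cref{subtree_lem} leaves room only for a path-worth of descendants under any edge. These constraints are reconcilable because of the doubly-exponential capacity schedule of $T_{FG}^h$: the $2^h$-factor jump in $u_e$ per level upward matches the $2^{h-1}$ branching factor, so $C_1$ can be assembled as a union of many disjoint thin subtrees---one thread per sibling group at each level of $T'$---whose combined demand on any edge telescopes below $u_e$. The technical heart of the proof is to carry out this construction so that it respects the sibling partition inherited from the previous level, simultaneously ensuring the routability of $C_1$ on every edge and the $2^h$-factor reduction in per-level size of the residual.
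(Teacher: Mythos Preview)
Your induction on $c$ via a single large absorbing class $C_1$ does not work as stated: a set $C_1$ satisfying both (i) $E(T')$-routability and (ii) at most $2^{h(c-2)}$ residual vertices per level need not exist. Take $c=2$, $h\ge 3$, and let $T'$ consist of $v$, two children $v_1,v_2$, and three grandchildren under each $v_i$; every level has at most $6\le 2^h$ vertices, so the hypothesis holds. Condition (ii) forces $C_1$ to miss at most one vertex per level, hence $C_1$ contains some $v_i$ together with at least two of its children. But on the edge $vv_i$ the demand of $v_i$ alone is $u_{vv_i}(1-2^{-h})$, and two grandchildren add another $2d_3 > u_{vv_i}\cdot 2^{-h}$, so the total exceeds $u_{vv_i}$ and (i) fails. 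The obstruction is intrinsic: any routable set containing a vertex $w$ can contain at most one child of $w$ (since $d_w+2d_{\text{child}}>u_{ww'}$), whereas (ii) demands that $C_1$ swallow all but a $2^{-h}$-fraction of every level simultaneously.

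The paper avoids this tension by a completely different and much simpler device: it does \emph{not} remove one class and recurse on $c$, but instead writes down all $c$ classes at once via a \emph{layered} colouring. With $L_1=\{v\}$ and $L_{k+1}$ the $T'$-children of $L_k$, it sets $X_i=L_i\cup L_{i+c}\cup L_{i+2c}\cup\cdots$ for $i=1,\dots,c$. Routability of each $X_i$ follows from a single inequality: the hypothesis $|L_{jc+1}|\le 2^{h(c-1)}$ is exactly what is needed so that the total demand of layer $L_{jc+1}$ is at most the demand of a \emph{single} vertex $c-1$ levels higher (in $L_{(j-1)c+2}$), whence $X_i$ is dominated edge-by-edge by a root-to-leaf path and \cref{lem:path} finishes. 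The induction in the paper is on the depth of $T'$, not on $c$, and is used only to glue the layered colourings of the subtrees below $v$; no absorbing class and no cross-component recombination are required.
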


\begin{figure}[t]
    \centering
    \begin{tikzpicture}[
        level 1/.style={sibling distance = 5cm},
        level 2/.style={sibling distance = 2cm},
        level 3/.style={sibling distance = 0.5cm},
        level/.style={level distance = 1cm},
        dot/.style = {circle, fill, minimum size=6pt, inner sep=0pt, outer sep=0pt},
        every fit/.style={draw,ultra thick,minimum size=20pt},
        ]
        \node [] (r) {}
            child { node [] {} edge from parent[draw=none] }
            child{ node [] (v) {$v$} edge from parent[dashed]
                child{ node [] (v1) {$v_1$} edge from parent[solid]
                    child{ node [] (l3l) {} edge from parent[dashed] }
                    child{ node []  {$\dots$} edge from parent[dashed]
                        child{ node [] {$\dots$} edge from parent[draw=none]
                            child{ node [] (lcpl) {$\dots$} edge from parent[draw=none]
                                child{ node [] (lcppl) {$\dots$} edge from parent[draw=none]
                                    child{ node [] {$\dots$} edge from parent[draw=none] }
                                }
                            }
                        }
                    }
                    child{ node [] {} edge from parent[dashed] }
                }
                child{ node [] {$\dots$} edge from parent[draw=none]
                    child{ node [] {$\dots$} edge from parent[draw=none]
                        child{ node [] {$\dots$} edge from parent[draw=none]
                            child{ node [] {$\dots$} edge from parent[draw=none]
                                child{ node [] {$\dots$} edge from parent[draw=none]
                                    child{ node [] {$\dots$} edge from parent[draw=none] }
                                }
                            }
                        }
                    }
                }
                child{ node [] (vp) {$v_p$} edge from parent[solid]
                    child{ node [] {} edge from parent[dashed] }
                    child{ node [] {$\dots$} edge from parent[dashed]
                        child{ node [] {$\dots$} edge from parent[draw=none]
                            child{ node [] (lcpr) {$\dots$} edge from parent[draw=none]
                                child{ node [] (lcppr) {$\dots$} edge from parent[draw=none]
                                    child{ node [] {$\dots$} edge from parent[draw=none] }
                                }
                            }
                        }
                    }
                    child{ node [] (l3r) {} edge from parent[dashed] }
                }
            }
            child { node [] {$L_1$} edge from parent[draw=none]
                child { node [] {$L_2$} edge from parent[draw=none]
                    child { node [] {$L_3$} edge from parent[draw=none]
                        child { node [] {$\dots$} edge from parent[draw=none]
                            child { node [] {$L_{c+1}$} edge from parent[draw=none]
                                child { node [] {$L_{c+2}$} edge from parent[draw=none]
                                    child { node [] {$\dots$} edge from parent[draw=none] }
                                }
                            }
                        }
                    }
                }
            }
            ;
        \node[fit=(v),red!50] {};
        \node[fit=(v1) (vp),green!50] {};
        \node[fit=(l3l) (l3r),blue!50] {};
        \node[fit=(lcpl) (lcpr),red!50] (cp1) {};
        \node[fit=(lcppl) (lcppr),green!50] (cp2) {};
    \path [draw,-Triangle,thick] (cp1) to[out=180,in=-180] (v1) node[] {};
    \path [draw,-Triangle,thick] (cp2) to[out=180,in=-180] (l3l) node[] {};
    \end{tikzpicture}
    \caption[A diagram to aid with understanding the proof of \cref{lem:tree}.]
    {A diagram to aid with understanding the proof of \cref{lem:tree}.
    The key observation is illustrated by the arrows on the left: the total demand of every request
    in the box at the tail of an arrow is at most the demand of a single request at the tip of that arrow.
    }
    \label{fig:khlb}
\end{figure}
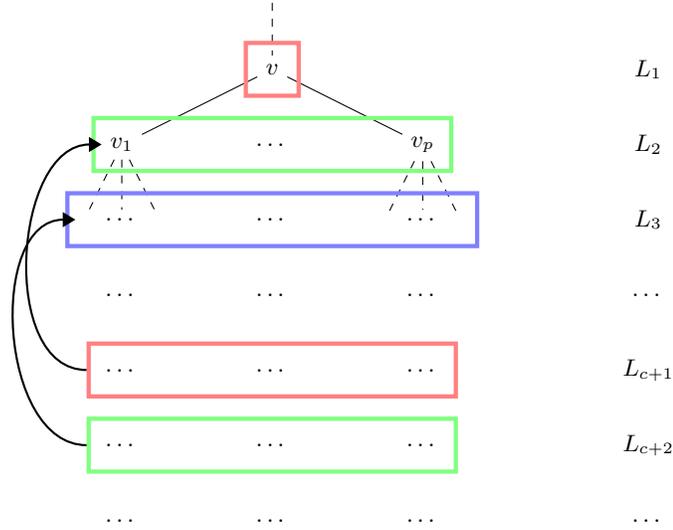
\begin{proof}
We prove this by induction using a stronger induction hypothesis. Specifically, not
only does the colouring exist but we may use the following special type of
colouring. We define layers $L_k$ of $T'$ inductively where $L_1 = \{v\}$. For
each $k \geq 1$, $L_{k+1}$ consists of the children of the requests in layer $L_k$ which
are contained in $T'$. Then for each $i=1,2, \ldots ,c$ we claim that $X_i =
L_i \cup L_{i+c} \cup L_{i+2c} \cup \ldots$ is $E(T')$-routable. Hence,
$X_1,X_2, \ldots ,X_c$ is a valid $c$-colouring which we call {\em layered}. We
claim that a layered colouring always exists for any such subtree $T'$.  The
base case is a single-vertex tree which is trivially true for any $c \geq 1$.

Now consider the children of $v$ in $T'$. Call these $v_1, v_2, \ldots v_p$ and
let $T_i$ be the subtrees of $T'$ associated with each $v_i$.  By induction,
each $T_i$ has a layered colouring which uses at most $c$ colours. Assume we have such a colouring and
without loss of generality that each $v_i$ has colour class $2$,
the next layer below that has colour class $3$, and so on up to colour class $c$,
after which the next layer has colour class $1$.
We show that $v$ can be added to colour class $1$.
Let $X_i$ denote the union of
the colour classes $i$ which occur for the $T_j$. Each layered
colour class $X_i$ is $E(T_i)$-routable and thus is also $E(T')$-routable.
Hence, it only remains to show that $X_1  \cup \{v\}$ is also $E(T')$-routable. Note
that $X_1\cup\{v\}$ consists of layers $L_1 \cup L_{c+1} \cup L_{2c+1} \cup \ldots \cup L_{qc+1}$
of $T'$ for some choice of $q$. Recall that \cref{lem:path} asserts that the
requests along any path from $v$ to the leaves of $T$ is routable.
We show that for all $i$, the total demand of requests of $L_{ic+1}$ is
at most the demand of a single request in $L_{(i-1)c+2}$, so
the total demand from requests in $X_1$ on any edge is at most the demand from routing
a path from $v$ to a leaf, and thus is routable.
See \cref{fig:khlb} for a visual depiction of this.
Suppose this is not the case.
By the self similarity of the tree, we can assume that
the demand of a request in $L_{ic+1}$ is
\[2^{h(h-(ic-1)+1)}-2^{h(h-(ic+1))}=(2^h-1)2^{h(h-ic-1)} \]
and the demand of a request in $L_{(i-1)c+2}$ is
\[ 2^{h(h-((i-1)c+2)+1)}-2^{h(h-((i-1)c+2))}=(2^h-1)2^{h(h-(i-1)c-2)}. \]
Then, we have
\[
|L_{ic+1}| \cdot (2^h-1)2^{h(h-ic-1)} > &~ (2^h-1)2^{h(h-(i-1)c-2)} \\
\Leftrightarrow & &  \\
|L_{ic+1}| > &~\frac{2^{h(h-(i-1)c-2)}}{2^{h(h-ic-1)}} = 2^{h(c-1)},
\]
\noindent
which contradicts our hypothesis.\qed
\end{proof}

\noindent We now complete the proof of \cref{prop:colouring}.

\begin{proof}
Let $v$ be the least common ancestor of the vertices which are incident to the edges
in $S$. We now create a  subtree $T'$ which is a sort of closure of $S$. $T'$ is
obtained by adding edges to $S$ of any path between $v$ and some vertex
incident to an edge $e \in S$. We also include the parent edge
of $v$. We claim that $T'$ satisfies the
hypothesis of \cref{lem:tree}. To see this, consider some level of $T'$
consisting of vertices $a_1, \ldots, a_p$. Let $E_i$ denote the set of edges
which are either incident to  vertex $a_i$ or lie in its subtree.  Note that the
$E_i$ are disjoint. Since each $a_i$ is either incident to an edge of $S$, or
is the internal vertices of some path  used to define the closure $T'$, it follows
that $E_i \cap S \neq \emptyset$ for each $i$, and hence $p \leq \sum_{i=1}^p |E_i \cap S|
\leq |S| \leq 2^{h(c-1)}$.

We now colour all the requests of $T$. We first invoke \cref{lem:tree} to
colour  $R(T')$ using $c$ colours. We
can partition $R(T) \setminus R(T')$ as $A \cup B$, where $B$ denotes the requests
``below'' $T'$ (their paths to the root of $T$ intersect $T'$) and $A$ denotes the
remaining  ``above'' requests. The set $B$ is $S$-routable by \cref{subtree_lem}.
Requests in the set $A$ do not even route on any edge of $S$. Hence, $A \cup B$ can
be the $(c+1)^{st}$ colour class.\qed
\end{proof}

\section{Integrality Gap Upper Bound}
\label{sec:upperbound}

In this section, we prove \cref{thm:fgub},
namely that for instances $T^h_{FG}$ and $c>0$, the integrality gap of both $P^{\r^c}$ and $P^{\r^c}_{\rank}$ is $O(1/c)$.

\begin{theorem}
    \label{thm:newupperbound}
    Let $\ell$ be the largest integer such that $n(\ell)\le t$ (with $n(\ell)$ as defined in \cref{sec:fg}).
    The integrality gap for optimizing over $P^t$ (with profits defined in \cref{sec:fg})
    for instances $T_{FG}^h$ is $O(h/\ell)$.
\end{theorem}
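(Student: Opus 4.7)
The plan is to mimic the strategy of \cref{lem:ktupperbound}: partition the requests into groups and bound each group's profit under $P^t$ via the integer hull it implies. I would partition the $h$ levels of $T_{FG}^h$ into $\lceil h/\ell\rceil$ chunks of $\ell$ consecutive levels. For the $i$-th chunk and each vertex $v$ at its top level $(i-1)\ell+1$, let $T^v$ denote the subtree of vertices in the chunk rooted at $v$. By self-similarity, $|V(T^v)|=n(\ell)\le t$, so $|E(T^v)|< t$, and the families $\{T^v\}$ partition all requests of $T_{FG}^h$.

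Fix $x\in P^t$. Since $|E(T^v)|\le t$, $x\in K_I(E(T^v))$, so $x$ is a convex combination of 0-1 vectors each satisfying the $E(T^v)$-capacity constraints. Because requests outside $T^v$ contribute nothing to $\sum_{r\in T^v}w_rx_r$ but do appear in those constraints, we may zero them out to obtain
\[
\sum_{r\in T^v}w_rx_r\;\le\;\OPT_v\;:=\;\max\Bigl\{\textstyle\sum_{r\in T^v}w_rz_r \,:\, z\in\{0,1\}^{T^v},\ \textstyle\sum_{r:e\in P_r}d_rz_r\le u_e\ \forall e\in E(T^v)\Bigr\},
\]
that is, the largest profit of a subset of $T^v$-requests routable on the internal edges of $T^v$ alone.

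The crux is $\OPT_v=O(p_v)$, where $p_v=2^{-(h-1)(i-1)\ell}$ is the profit of $v$'s request. Since $v$'s request uses no edge in $E(T^v)$, it may always be included and contributes $p_v$. For each child $u$ of $v$, let $J^u$ be the maximum profit of a subset of $T^u\cap T^v$ routable on $E(T^u\cap T^v)\cup\{(v,u)\}$. I would show inductively on the depth of the subtree that $J^u\le p_u\cdot(1+2^{-(h-1)})^{\ell-1}=O(p_u)$: the argument uses that each request at relative depth $d$ of $T_{FG}^h$ has demand $(1-2^{-h})$ times the capacity of the edge directly above it, so on any single edge the two extremal options---taking the request at its lower endpoint, or taking its entire strict subtree (feasible by \cref{subtree_lem})---each normalize to the root profit, and mixing yields only a $(1+O(2^{-h}))$ multiplicative correction. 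Summing $J^u$ over the $2^{h-1}$ children of $v$ and adding $p_v$ gives $\OPT_v=O(p_v)$.

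Summing $\OPT_v$ over the $2^{(h-1)(i-1)\ell}$ vertices at the top of chunk $i$ yields $O(1)$; summing over the $\lceil h/\ell\rceil$ chunks yields $w^Tx=O(h/\ell)$. Since $\OPT\ge 1$ (a single request is always routable), the integrality gap is $O(h/\ell)$. The hard part is the claim $\OPT_v=O(p_v)$: the LP value on the internal routing polytope of $T^v$ is actually $\Theta(\ell\cdot p_v)$, so we genuinely need the integer hull provided by $K_I(E(T^v))$, and the recursion only closes because each edge of $T_{FG}^h$ is nearly saturated by the single request directly below it, preventing profit from accumulating across the $\ell$ levels of $T^v$.
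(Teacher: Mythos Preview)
Your high-level strategy coincides with the paper's: slice the $h$ levels into $\lceil h/\ell\rceil$ bands, use $P^t\subseteq K_I(E(T^v))$ on each depth-$\ell$ subtree $T^v$, bound the integral optimum on $R(T^v)$ by $O(p_v)$, and sum. The paper does exactly this (its $T_v^\ell$ additionally includes the parent edge of $v$, which makes the self-similarity slightly cleaner, but this is cosmetic).

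The substantive difference is how you establish $\OPT_v=O(p_v)$. The paper does not prove this from scratch: it observes that, after rescaling demands/capacities by $2^{hi\ell}$ and profits by $2^{(h-1)i\ell}$, the sub-instance $T_v^\ell$ is (a truncation of) $T_{FG}^h$ itself, and then invokes the Friggstad--Gao bound that every routable set in $T_{FG}^h$ has profit at most $2$. That immediately gives $\OPT_v\le 2p_v$. Your sketch instead tries to re-derive this bound inductively, and this is where the write-up is thin. The ``two extremal options'' you describe (take $u$ alone, or take all of $T^{<u}$) are feasible on the \emph{single} edge $(v,u)$, but $J^u$ is constrained by all edges of the subtree simultaneously; the second option is generally infeasible there. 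The real inductive case-split is ``take $u$ (leaving capacity $u_e\cdot 2^{-h}$ on $(v,u)$, which is a shared budget across all $2^{h-1}$ child subtrees) versus don't take $u$'', and the former case requires an argument bounding the best profit achievable under a shared top-edge budget---this is precisely the content of the Friggstad--Gao $\OPT\le 2$ proof, not something the ``mixing yields a $(1+O(2^{-h}))$ correction'' sentence discharges. Your target inequality $J^u\le p_u(1+2^{-(h-1)})^{\ell-1}$ is in fact correct (and tight for small $h$), but you should either supply the missing case analysis or, as the paper does, simply cite \cite{friggstad2015linear} for the $\OPT\le 2$ bound and use self-similarity.
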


We saw in \cref{sec:fg} that $\ell=\Theta\left({\log(n(\ell))}/h\right)$ and
$h=\Theta\left(\sqrt{\log \r}\right)$. For $c>0$ and $t=\r^c$, the theorem
statement chooses $\ell=\Theta\left({\log(\r^c)}/h\right)$, so the integrality gap is $O(h/\ell)=O(1/c)$,
proving \cref{thm:fgub}.

We show a particular way to partition the requests of the tree into $O(h/\ell)$ sets,
and then show that for each set the profit of any $x\in P^t$ which uses only the requests in that set is $O(1)$.
Since the integral optimum for instances $T_{FG}^h$ is at most $2$ \cite{friggstad2015linear},
it follows that the integrality gap of $P^t$ is $O(h/\ell)$ on these instances.
The proof relies on the self similar structure of the Friggstad-Gao instances, namely that
every vertex except for the leaves and the root has exactly $2^{h-1}$ children
and capacities and demands scale down by $2^h$ for each step away from the root.

For $v\ne r$ let $T^\ell_v$ be the subtree consisting of the first $\ell$ levels of the
children of vertex $v$ along with the edge immediately above $v$. The edge immediately above $v$ has its upper endpoint outside of the subtree.
We denote the edges of the subtree, vertices of the subtree, and requests with an endpoint inside the subtree by $E(T^\ell_v)$, $V(T^\ell_v)$, and $R(T^\ell_v)$, respectively.
For Friggstad-Gao instances, $|E(T^\ell_v)|=|V(T^\ell_v)|=|R(T^\ell_v)|$; we denote this size simply by $|T^\ell_v|$.
Notice that we have $|T_v^\ell|\le n(\ell)$ by self similarity, and this holds with equality unless $v$ is less than $\ell$ levels from the leaves.
Since we assumed $n(\ell)\le t$, we have $|T^\ell_v|\le t$.
For vectors $x\in \bb R^\r$, we denote by $x_{T^\ell_v}$ the restriction of $x$ to those requests with an endpoint in $T^\ell_v$.

We now define, for each $0\le i<\lceil h/\ell\rceil$, a set of subtrees
$\cal P_i=\left\{T_v^\ell:v\in\level_{i\ell+1}\right\}$.
Let $x_{\cal P_i}$ denote the restriction of $x$ to those requests with an endpoint in some $T^\ell_v\in \cal P_i$.
Observe that the union $\cal P=\bigcup\cal P_i$ of these subtrees is a partition of $T^h_{FG}\setminus\{r\}$
into edge and vertex disjoint subtrees. See \cref{fig:khub} for a visual depiction of this.
The following lemma bounds the profit obtainable using requests with an endpoint in some $\cal P_i$.

\begin{figure}[t]
    \centering
    \begin{tikzpicture}[
        level 2/.style={sibling distance = 3cm},
        level 3/.style={sibling distance = 0.65cm},
        level/.style={level distance = 1.5cm},
        dot/.style = {circle, fill, minimum size=6pt, inner sep=0pt, outer sep=0pt},
        hole/.style = {circle, draw=black,fill=white, minimum size=6pt, inner sep=0pt, outer sep=0pt},
        tri/.style={
            draw,dashed,shape border uses incircle,
            isosceles triangle,shape border rotate=90,minimum height=1.5cm,isosceles triangle stretches=true},
        ]
        \node [] (r) {$r$}
            child{ node[dot,yshift=0.25cm] {}
                { node[tri,minimum width=12cm] {}
                    child [yshift=0.75cm] { node [hole] {} edge from parent[draw=none]
                        child [yshift=0.25cm] {  node [dot,yshift=0.5cm] {}
                        { node [tri,minimum width=2.8cm] {} edge from parent[draw=none]
                            child [yshift=0.9cm] {node [hole] {} edge from parent[draw=none]
                                child [yshift=0.25cm] {  node [dot,yshift=0.5cm] {}
                                { node [tri,minimum width=.5cm,minimum height=1.25cm] {} edge from parent[draw=none]} }}
                            child [yshift=0.9cm] {node [hole] {} edge from parent[draw=none]
                                child [yshift=0.25cm] {  node [dot,yshift=0.5cm] {}
                                { node [tri,minimum width=.5cm,minimum height=1.25cm] {} edge from parent[draw=none]} }}
                                child{ node [] {$\dots$} edge from parent[draw=none]}
                            child [yshift=0.9cm] {node [hole] {} edge from parent[draw=none]
                                child [yshift=0.25cm] {  node [dot,yshift=0.5cm] {}
                                { node [tri,minimum width=.5cm,minimum height=1.25cm] {} edge from parent[draw=none]} }}
                            child [yshift=0.9cm] {node [hole] {} edge from parent[draw=none]
                                child [yshift=0.25cm] {  node [dot,yshift=0.5cm] {}
                                { node [tri,minimum width=.5cm,minimum height=1.25cm] {} edge from parent[draw=none]} }}
                    }}}
                    child [yshift=0.75cm] {node [hole] {} edge from parent[draw=none]
                        child [yshift=0.25cm] {  node [dot,yshift=0.5cm] {}
                        { node [tri,minimum width=2.8cm] {} edge from parent[draw=none]} }}
                    child{ node [] {$\dots$} edge from parent[draw=none]
                        child [xshift=3.5cm,yshift=-0.5cm]{ node [] {$\dots$} edge from parent[draw=none]}
                        child [yshift=-0.5cm]{ node [] {$\dots$} edge from parent[draw=none]}
                        child [xshift=-3.5cm,yshift=-0.5cm]{ node [] {$\dots$} edge from parent[draw=none]}
                    }
                    child [yshift=0.75cm] {node [hole] {} edge from parent[draw=none]
                        child [yshift=0.25cm] {  node [dot,yshift=0.5cm] {}
                        { node [tri,minimum width=2.8cm] {} edge from parent[draw=none]} }}
                    child [yshift=0.75cm] {node [hole] {} edge from parent[draw=none]
                        child [yshift=0.25cm] {  node [dot,yshift=0.5cm] {}
                        { node [tri,minimum width=2.8cm] {} edge from parent[draw=none]
                            child [yshift=0.9cm] {node [hole] {} edge from parent[draw=none]
                                child [yshift=0.25cm] {  node [dot,yshift=0.5cm] {}
                                { node [tri,minimum width=.5cm,minimum height=1.25cm] {} edge from parent[draw=none]} }}
                            child [yshift=0.9cm] {node [hole] {} edge from parent[draw=none]
                                child [yshift=0.25cm] {  node [dot,yshift=0.5cm] {}
                                { node [tri,minimum width=.5cm,minimum height=1.25cm] {} edge from parent[draw=none]} }}
                                child{ node [] {$\dots$} edge from parent[draw=none]}
                            child [yshift=0.9cm] {node [hole] {} edge from parent[draw=none]
                                child [yshift=0.25cm] {  node [dot,yshift=0.5cm] {}
                                { node [tri,minimum width=.5cm,minimum height=1.25cm] {} edge from parent[draw=none]} }}
                            child [yshift=0.9cm] {node [hole] {} edge from parent[draw=none]
                                child [yshift=0.25cm] {  node [dot,yshift=0.5cm] {}
                                { node [tri,minimum width=.5cm,minimum height=1.25cm] {} edge from parent[draw=none]} }}
                    }}}
                }
            };
    \end{tikzpicture}
    \caption[A diagram to aid with understanding the proof of \cref{lem:klayerupperbound}.]{
        The partition of $T_{FG}^h$ used to upper bound the integrality gap of the knapsack intersection hierarchy.
        Each vertex marked by $\bullet$ is associated with a subtree $T_v^\ell$,
        indicated here by a dashed triangle. Each triangle spans $\ell$ layers of the tree, i.e.,
        if a vertex marked by $\bullet$ is in level $k$, then the vertex marked by $\circ$ immediately
        below it are in level $k+\ell-1$.
        The set $\cal P_i$ contains the $i^{th}$ level of subtrees. For example,
        $\cal P_0$ contains the single triangle under $r$ and $\cal P_1$ contains all
        the triangles immediately below that.
    }
    \label{fig:khub}
\end{figure}
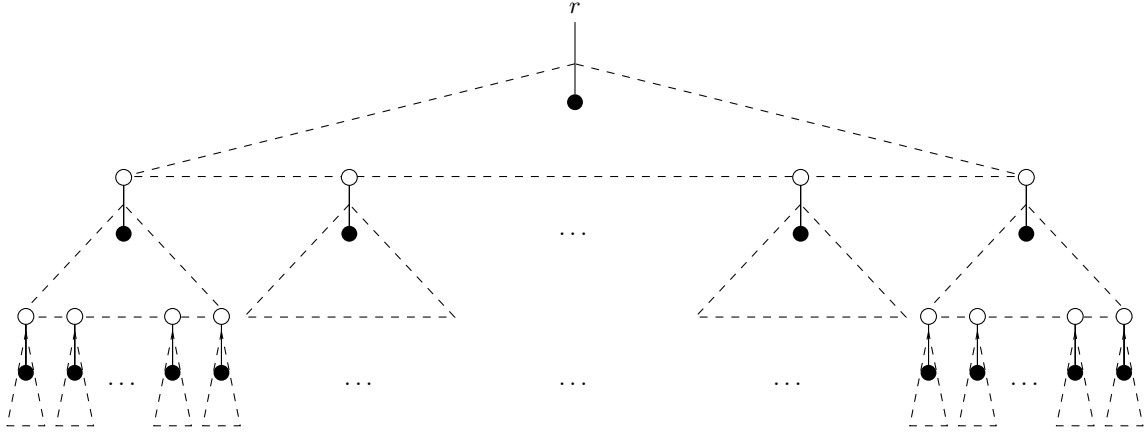
\begin{lemma}\label{lem:klayerupperbound}
    For any feasible vector $x\in P^t$ we have $w^T_{\cal P_i}x_{\cal P_i}\le 2$ for all $0\le i< \lceil h/\ell\rceil$.
\end{lemma}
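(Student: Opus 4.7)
The plan is to show the per-subtree bound $w^Tx_{R(T^\ell_v)}\le 2w_v$ for each $v\in\level_{i\ell+1}$ and then sum, exploiting that profits at any single level of $T^h_{FG}$ sum to $1$. First I would use $x\in P^t$ locally: since $|E(T^\ell_v)|=|T^\ell_v|\le n(\ell)\le t$, we have $x\in K_I(E(T^\ell_v))$, so $x$ can be written as a convex combination of incidence vectors $\chi_S$ of subsets $S\sube R$ that are feasible on every $u_e,\,e\in E(T^\ell_v)$. Restricting this combination to the coordinates in $R(T^\ell_v)$, and noting that dropping requests from an $E(T^\ell_v)$-feasible set preserves $E(T^\ell_v)$-feasibility, yields $w^Tx_{R(T^\ell_v)}\le M_v$, where $M_v:=\max\{w(S):S\sube R(T^\ell_v),\,S\text{ is }E(T^\ell_v)\text{-feasible}\}$.

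The crux is then to show $M_v\le 2w_v$, which I would obtain in two parts. \emph{Part one (local-to-global feasibility):} any $E(T^\ell_v)$-feasible $S\sube R(T^\ell_v)$ is in fact globally feasible in $T^h_{FG}$, because in this single-sink instance each request in $R(T^\ell_v)$ routes only on the edges of $E(T^\ell_v)$ together with the edges strictly above $v$; on each such ancestor edge the demand contributed by $S$ equals the demand on the top edge of $T^\ell_v$, and since ancestor capacities grow by a factor $2^h$ per level going up, $E(T^\ell_v)$-feasibility propagates to all ancestors. \emph{Part two (self-similarity):} for $v$ at level $k=i\ell+1$, rescaling the capacities and demands of $T^\ell_v$ by $2^{h(k-1)}$ and its profits by $1/w_v$ produces an isomorphic copy of the ``top'' subinstance $T^\ell_{v_1}$ (possibly truncated at the leaves); since capacities and demands rescale identically, this preserves $E(\cdot)$-feasibility, so $M_v=w_v\cdot M(\ell)$ where $M(\ell)$ is the analogous quantity on $T^\ell_{v_1}$. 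Applying part one once more to $T^\ell_{v_1}$ bounds $M(\ell)$ by the integer optimum of $T^h_{FG}$ itself, which is at most $2$ by \cite{friggstad2015linear}. Hence $M_v\le 2w_v$.

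To conclude, the subtrees $\{T^\ell_v:v\in\level_{i\ell+1}\}$ have pairwise disjoint request sets, and $|\level_{i\ell+1}|\cdot w_v=2^{(h-1)i\ell}\cdot 2^{-(h-1)i\ell}=1$, so $w^T_{\cal P_i}x_{\cal P_i}=\sum_{v\in\level_{i\ell+1}}w^Tx_{R(T^\ell_v)}\le\sum_{v\in\level_{i\ell+1}}2w_v=2$. The main obstacle is recovering the factor $w_v$ per subtree: the naive bound $M_v\le 2$ following from the global IP optimum is too weak, since summing it over $\level_{i\ell+1}$ leaks a factor of $|\level_{i\ell+1}|$ and destroys the claim. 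The $w_v$ must come jointly from the self-similar rescaling of the FG construction and from the local-to-global feasibility observation that is specific to single-sink FG instances.
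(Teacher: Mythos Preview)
Your proposal is correct and follows essentially the same approach as the paper: use $|E(T^\ell_v)|\le t$ to place $x$ in $K_I(E(T^\ell_v))$, bound the profit of any $E(T^\ell_v)$-feasible $S\sube R(T^\ell_v)$ by $2w_v=2^{-(h-1)i\ell+1}$ via self-similarity together with the Friggstad--Gao integral optimum bound of $2$, and then sum over the $2^{(h-1)i\ell}$ subtrees in $\mathcal P_i$. The only cosmetic difference is in the local-to-global feasibility step: the paper invokes \cref{subtree_lem} to show ancestor edges are not violated, whereas you argue directly that the demand on any ancestor edge equals the demand on the top edge of $T^\ell_v$ (which is already constrained in $E(T^\ell_v)$) and that ancestor capacities only increase---both arguments are valid and yield the same conclusion.
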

\begin{proof}
    Let $T_v^\ell\in\cal P_i$. First we show that every feasible subset of $R(T_v^\ell)$ has
    profit at most $2^{-(h-1)i\ell+1}$. This follows by the self similarity of the instance;
    scaling all demands and capacities in $T_v^\ell$ by $2^{hi\ell}$ and all
    profits by $2^{(h-1)i\ell}$ produces a tree identical to $T_{v_1}^\ell$ (recall $v_1$ is the single child vertex of the root $r$).
    For instances $T_{FG}^h$, every routable set has profit at most $2$ \cite{friggstad2015linear}, so
    if we only use requests in $T_{v_1}^\ell$ the profit certainly must be less than $2$.
    By scaling as necessary, it then follows that any feasible subset of $R(T_v^\ell)$ has profit at most $2^{-(h-1)i\ell+1}$,
    as desired.

    Now, we show that to determine feasibility of a subset of $R(T_v^\ell)$ it is
    sufficient to check only the capacity constraints of the edges $E(T_v^\ell)$.
    If $S$ is routable, then clearly no capacity constraints are violated, so assume conversely that $S$ is not routable.
    By \cref{subtree_lem}, no edge which is outside of $E(T_v^\ell)$ and is an ancestor (towards the root) of any
    edge in $S$ has its capacity violated by routing all requests in $T$.
    Furthermore, any other edge which is outside of $E(T_v^\ell)$
    is not routed on by the requests in $S$ and thus cannot be violated.
    Thus, in order for $S$ to not be routable,
    the capacity of one of the edges in $E(T_v^\ell)$ must be violated.

    Since $K_I(E(T_v^\ell))$ is an integer hull, any $x\in K_I(E(T_v^\ell))$ can be written
    as a convex combination of integral vectors in $K_I(E(T_v^\ell))$.
    We saw that to determine feasibility of a subset of $R(T_v^\ell)$ it is sufficient to check the capacity constraints of edges in $E(T_v^\ell)$.
    Thus, for $x\in K_I(E(T_v^\ell))$ such that $x\le1_{R(T_v^\ell)}$,
    we can write $x$ as a convex combination of integral vectors $1_S$ for routable sets $S\subseteq R(T_v^\ell)$, which we know all have profit at most $2^{-(h-1)i\ell+1}$.
    Given $|T_v^\ell|\le t$, any $x\in P^t$ has $x\in K_I(E(T_v^\ell))$, so $w^T_{T_v^\ell}x_{T_v^\ell}\le2^{-(h-1)i\ell+1}$.
    Finally, $|\cal P_i|=|\level_{i\ell+1}|=2^{(h-1)i\ell}$, so we
    can conclude that $w^T_{\cal P_i}x_{\cal P_i}\le2^{-(h-1)i\ell+1}\cdot2^{(h-1)i\ell}= 2$.\qed
\end{proof}
\begin{proof}[\cref{thm:newupperbound}]
    Let $x\in P^t$.
    From \cref{lem:klayerupperbound}, we know that for each $\le i\le\lfloor h/\ell\rfloor$ we have $w^T_{\cal P_i}x_{\cal P_i}\le2$.
    Summing over all $i$, we find that $w^Tx\le2\lfloor h/\ell\rfloor\le 2h/\ell$.
    We know that the integral optimum is $\Omega(1)$, so
    the integrality gap of $P^t$ is $O(h/\ell)$.
    Since the rank formulation is stronger than the natural LP formulation,
    the integrality gap of $P^t_{\rank}$ is $O(h/\ell)$ as well.\qed
\end{proof}

    \section{Conclusion}
    \label{sec:conclusion}

    It would be interesting to establish stronger links to existing
    hierarchies such as those given by Lasserre, Parillo,
    Lov\'asz-Schrijver, Sherali-Adams, or Chv\'atal, or that induced by the split closure. In terms of achieving
    stronger approximations for ANF-Tree, we see two interesting directions.
    One is to consider a rank $t$ approximation $P'^t$ based on intersecting a
    structured set of $t$-row cuts (as opposed to all possible $t$-row cuts, as we have done here).
    This may allow tractable formulations with larger values of $t$.
    A related idea is to consider the intersection of the integer hulls of sub-instances
    induced by keeping a subset of the requests (instead of keeping a subset of the edges).
    For example, to restrict to the set of requests which pass through at least one of some set of $t$ edges,
    as such instances are known to be easier to approximate \cite{grandoni2017augment}.
    Lastly, the question of whether $P_{\rank}^{k^c}$ has constant integrality gap for general
    ANF-Tree instances has so far eluded us; it remains a very interesting question.

~

\noindent
{\bf Acknowledgements:}  The authors are grateful to NSERC for supporting this research.  We would also like to thank Joe Paat for his careful reading and suggestions which improved the presentation.

\bibliographystyle{hep}
\bibliography{references}

\end{document}